\newcommand{\MC}{\textsc{Max-Cut}\xspace}
\newif\iffigfromfile
\begin{document}

\title{Sketching Cuts in Graphs and Hypergraphs}

\author{Dmitry Kogan%
\thanks{Work supported in part by a US-Israel BSF grant \#2010418,
Israel Science Foundation grant \#897/13,
and by the Citi Foundation.
Email: \texttt{\{dmitry.kogan,robert.krauthgamer\}@weizmann.ac.il}
}
\qquad\qquad Robert Krauthgamer%
\footnotemark[1]
\\
Weizmann Institute of Science
}

\maketitle

\begin{abstract}
  Sketching and streaming algorithms are in the forefront of current
  research directions for cut problems in graphs.
  In the streaming model, we show that $(1-\eps)$-approximation
  for \MC must use $n^{1-O(\eps)}$ space; moreover,
  beating $4/5$-approximation requires polynomial space.
  For the sketching model, we show that $r$-uniform hypergraphs
  admit a $(1+\eps)$-cut-sparsifier (i.e., a weighted subhypergraph
  that approximately preserves \emph{all} the cuts)
  with $O(\eps^{-2}n(r+\log n))$ edges.
  We also make first steps towards sketching general CSPs 
  (Constraint Satisfaction Problems).
\end{abstract}

\section{Introduction}

The emergence of massive datasets has turned many algorithms
impractical, because the standard assumption of having (fast) random
access to the input is no longer valid.  One example is when data is
too large to fit in the main memory (or even on disk) of one machine;
another is when the input can be accessed only as a stream, e.g.,
because its creation rate is so high, that it cannot even be stored in
full for further processing.
Luckily, the nature of the problems has evolved too, and we may often
settle on approximate, rather than exact, solutions.

These situations have led to the rise of new computational paradigms.
In the \emph{streaming model} (aka \emph{data-stream}), the input can
be accessed only as a stream (i.e., a single pass of sequential
access), and the algorithm's space complexity (storage requirement)
must be small relative to the stream size.  In the \emph{sketching
  model}, the input is summarized (compressed) into a so-called
sketch, which is short and suffices for further processing without
access to the original input.
The two models are related -- sketches are often useful in the design
of streaming algorithms, and vice versa.  In particular, lower bounds
for sketch-size often imply lower bounds on the space complexity of
streaming algorithms.

\paragraph{Graph problems.}
Recently, the streaming model has seen many exciting developments on
\emph{graph problems}, where an input graph $G=(V,E)$ is represented
by a stream of edges.  The algorithm reads the stream and should then
report a solution to a predetermined problem on $G$, such as graph
connectivity or maximum matching; see e.g.\ the
surveys~\cite{Zhang10,McGregor14}.
Throughout it will be convenient to denote $n=\card{V}$, and to assume
edges have weights, given by $w:E\to\R_+$.
While initial efforts focused on polylogarithmic-space algorithms,
various intractability results have shifted the attention to what is
called the \emph{semi-streaming} model, where the algorithm's space
complexity is $\tilde O(n)$.%
\footnote{We use $\tilde O(f)$ to denote $O(f\polylog f)$,
which suppresses logarithmic terms.
}
In general, this storage is not sufficient to record the entire edge-set.

Cuts in graphs is a classical topic that has been studied extensively
for more than half a century,
and the last two decades have seen a surge of attention turning
to the question of their succinct representation.
The pioneering work of Bencz\'ur and Karger~\cite{benczurkarger1996}
introduced the notion of \emph{cut sparsifiers}:
given an undirected graph $G=(V,E,w)$,
a $(1+\eps)$-sparsifier is a (sparse) weighted subgraph $G'=(V,E',w')$
that preserves the value of every cut up to a multiplicative factor $1+\eps$.
Formally, this is written as
$$
  \forall S\subset V,
  \qquad
  1 \leq \frac{w'(S,\bar S)}{w(S,\bar S)} \leq 1+\eps;
$$
but it is sometimes convenient to replace the lefthand-side with $1-\eps$
or $\frac{1}{1+\eps}$, which affects $\eps\leq \tfrac12$ by only a
constant factor.  In addition to their role in saving storage,
sparsifiers are important because they can speed-up graph algorithms
whose running time depends on the number of edges.  Observe that
sparsifiers are a particularly strong form of graph-sketches since on
top of retaining the value of all cuts, they hold the additional
property of being subgraphs, rather than arbitrary data structures.

Ahn and Guha~\cite{ag09} built upon the machinery of cut sparsifiers
to present an $\tilde O(n/\eps^2)$-space streaming algorithm that
can produce a $(1+\eps)$-approximation to all cuts in a graph.
Further improvements handle also edge deletions \cite{agm12a, agm12, gkp12},
or the stronger notion of spectral sparsification (see~\cite{KLMMMS14}
and references therein).
These results are nearly optimal, due to a space lower bound
of $\Omega(n/\eps^2)$ for sketching all cuts in a graph \cite{AKW14}
(which improves an earlier bound of \cite{ag09}).

\paragraph{Recent Directions.}
These advances on sketching and streaming of graph cuts inspired new
questions.  One direction is to seek space-efficient streaming
algorithms for \emph{specific cut problems}, such as approximating
\MC, rather than \emph{all} cuts.  A second direction concerns
\emph{hypergraphs}, asking whether cut sparsification, sketching and
streaming can be generalized to hypergraphs.  Finally, viewing cuts in
graphs and hypergraphs as special cases of \emph{constraint
  satisfaction problems} (CSPs), we ask whether other CSPs also admit
sketches.
Currently, there is a growing interest in generalizing graph cut
problems to broader settings, such as sparsifying general \emph{set
  systems} using small weighted samples \cite{NR13},
\emph{high-dimensional expander theory} \cite{KKL14}, sparsest-cuts in
hypergraphs \cite{LM14,Louis14}, and applications of hypergraph cuts in
\emph{networking} \cite{yamaguchicyber}.

\subsection{Our Results}

We first address a natural question raised in \cite[Question
10]{OpenProblemsStreaming11}, whether the well-known \MC problem
admits approximation strictly better than factor $1/2$
by streaming algorithms that use space sublinear in $n$.
Here, \MC denotes the problem of computing the \emph{value} of a
maximum cut in the input graph $G$ (and not the cut itself),
since \emph{reporting} a cut requires space $\Omega(n)$
(see Subsection~\ref{lb_actual_cut} for a short proof).
We prove that for every fixed $\eps\in(0,\tfrac15)$, streaming
algorithms achieving $(1-\eps)$-approximation for \MC must use
$n^{1-O(\eps)}$ space.
In fact, even beating $4/5$-approximation requires polynomial space.
Our result is actually stronger and holds also
in a certain sketching model.
Previously, it was known that streaming computation of $\MC$ \emph{exactly}
requires $\Omega(n^2)$ bits \cite{Zelke11}.
Our proof is by reduction from the \textsc{Boolean Hidden Hypermatching} problem,
and captures the difficulty of distinguishing, under limited communication,
whether the graph is avertex-disjoint union of even-length cycles
(in which case the graph is bipartite) or of odd-length cycles
(in which case we can bound the maximum cut value).
See Section \ref{sec:MC} for details.

\medskip
Second, we study sparsification of cuts in \emph{hypergraphs},
and prove that every $r$-uniform hypergraph admits a sparsifier
(weighted subhypergraph) of size $\tilde O(rn/\eps^2)$
that approximates all cuts within factor $1\pm\eps$.
This result immediately implies sketching and streaming algorithms
(following \cite{ag09}).
Here, the weight of cut $(S,\bar S)$ in a hypergraph $H=(V,E,w)$ is
the total weight of all hyperedges $e\in E$ that intersect both $S$
and $\bar S$.%
\footnote{Another possible definition, see \cite[Corollary 7]{dCHS11},
  is $\sum_{e\in E} w_e\cdot \card{e\cap S}\cdot\card{e\cap \bar S}$.
  The latter definition seems technically easier for sparsification,
  although both generalize the case of ordinary graphs ($r=2$).
}
This question was raised by
de Carli Silva et al.~\cite[Corollary 8]{dCHS11}, who show that every
$r$-uniform hypergraph has a sparsifier of size $O(n)$ that
approximates all cuts within factor $\Theta(r^2)$. As hypergraph
cuts can be viewed as set-systems, sparsifiers of size $O(n^2)$,
regardless of $r$, follow implicitly from \cite{NR13}.
Along the way, we establish interesting, if not surprising, bounds
on the number of approximately minimum cuts in hypergraphs.
Technically, this is our most substantial contribution,
see Section~\ref{sec:hypergraphs} for details.

\medskip
Finally, as a step towards understanding a wider range of
CSPs, we show that $k$-SAT instances on $n$ variables admit a sketch
of size $\tilde O(kn/\eps^2)$ that can be used to
$(1+\eps)$-approximate the value of all truth assignments.
We prove this result in Section~\ref{sec:SAT} by reducing it
to hypergraph sparsification.
We remark that sketching of SAT formulae was studied in a different setting,
where some computational-complexity assumptions were used
in~\cite{vanmelkebeek} to preclude a significant size-reduction
that \emph{preserves the satisfiability} of the formula.
Our sparsification result
differs in that it \emph{approximately preserves the value} of all
assignments.

\paragraph{Related Work.}
Independently of our work, Kapralov, Khanna and Sudan~\cite{KKS14}
studied the same problem of approximating \MC in the streaming
model. They first prove that for every fixed $\eps>0$,
streaming algorithms achieving $(1-\eps)$-approximation for \MC must use
$n^{1-O(\eps)}$ space.
(This is similar to our Theorem~\ref{thm:streaming_mc}.)
They then make significant further progress,
and show that achieving an approximation ratio strictly better than
(the trivial) $1/2$ require $\tilde\Omega(\sqrt n)$ space.
In fact, this result holds even if the edges of the graph
are presented in a random (rather than adversarial) order.

\section{Sketching \MC}
\label{sec:MC}

The classical \MC problem is perhaps the simplest \textsc{Max-CSP} problem.
Therefore, it has been studied extensively,
leading to fundamental results both in approximation algorithms~\cite{GW95}
and in hardness of approximation~\cite{KKMO04}.
It is thus natural to study \MC also in the streaming model.
As mentioned above, preserving the values of \emph{all} cuts in a graph requires linear space even if only approximate values are required
\cite{ag09,AKW14},
which raises the question whether smaller space suffices to approximate
only the \MC value (as mentioned above,
it is natural to require the algorithm to report
only the value of the cut as opposed to the cut itself,
see Section~\ref{lb_actual_cut}).

Sketching all cuts in a graph clearly preserves also the maximum-cut
value, and thus an $\tilde O\left(\frac n{\eps^2}\right)$ space
streaming algorithm for a $(1-\eps)$-approximation of \MC follows
immediately from \cite{ag09}. Yet since the maximum cut value is
always $\Omega(m)$, where $m$ is the total number (or weight) of all
edges, a similar result can be obtained more easily by uniform
sampling (achieving $\eps m$ additive approximation for all cuts)
\cite[Theorem 21]{Zelke09}. The latter approach has the additional
advantage that it immediately extends to hypergraphs.

It turns out that this relatively straightforward
approach is not far from optimal,
as we prove that streaming algorithms that give a
$(1-\eps)$-approximation for \MC require $n^{1-O(\eps)}$ space.
\begin{theorem}\label{thm:streaming_mc}
  Fix a constant $\eps\in(0,\tfrac 15)$.
  Every (randomized) streaming
  algorithm that gives a $(1-\eps)$-approximation of the \MC value
  in $n$-vertex graphs requires
  space $\Omega(n^{1-1/t})$ for $t= \lfloor \frac 1{2\eps} - \frac 12 \rfloor$,
  which in particular means space $n^{1-O(\eps)}$.
\end{theorem}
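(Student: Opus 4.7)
The plan is to reduce from the \emph{Boolean Hidden Hypermatching} problem
\textsc{BHH}$_t$ of Verbin and Yu, whose one-way randomized communication
complexity is $\Omega(n^{1-1/t})$. In \textsc{BHH}$_t$ Alice holds
$x\in\{0,1\}^n$, Bob holds a perfect $t$-hypermatching $M$ on $[n]$ together
with a string $w\in\{0,1\}^{n/t}$, and, under the promise that
$Mx\oplus w\in\{0^{n/t},1^{n/t}\}$ where $(Mx)_e:=\bigoplus_{i\in e}x_i$,
Bob must decide which case holds. Feeding a one-pass streaming algorithm
Alice's edges first and then Bob's edges yields a one-way protocol whose
transcript is the memory snapshot at the split, so it suffices to construct
a graph $G$ from the \textsc{BHH}$_t$ instance whose \MC value distinguishes
the two cases.

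My graph $G$ will be a vertex-disjoint union of $n/t$ cycles, one per
hyperedge of $M$, whose parities encode $Mx\oplus w$ coordinatewise. For
each $i\in[n]$ Alice introduces a ``path gadget'' of length $1+x_i$ between
two designated ports $u_i,v_i$: a direct edge if $x_i=0$, a length-two path
through a fresh internal vertex if $x_i=1$. For each hyperedge
$e=\{i_1,\dots,i_t\}\in M$ Bob threads the $t$ gadgets into a single cycle
by adding the connectors $v_{i_j}u_{i_{j+1}}$ for $j=1,\dots,t-1$ and
closing the cycle either via $v_{i_t}u_{i_1}$ (if $w_e=0$) or through one
additional fresh vertex (if $w_e=1$). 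A direct count shows the cycle at $e$
has length $2t+\sum_{i\in e}x_i+w_e$, whose parity is exactly
$(Mx\oplus w)_e$. Hence in the YES case every cycle is even so $G$ is
bipartite and $\MC(G)=m$ (the total edge count), while in the NO case every
cycle is odd so $\MC(G)\le m-n/t$.

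Choosing $t=\lfloor 1/(2\eps)-1/2\rfloor$ is meant to make the multiplicative
gap $(n/t)/m$ exceed $\eps$, so that any $(1-\eps)$-approximation of \MC
distinguishes the two \textsc{BHH} cases; since $G$ has $\Theta(n)$ vertices,
the $\Omega(n^{1-1/t})$ communication lower bound then yields the claimed
space bound $\Omega(n^{1-1/t})$. The main technical obstacle is calibrating
this gap precisely, as $m=2n+|x|+|w|$ depends on the input: one must either
restrict \textsc{BHH}$_t$ to a hard distribution where $|x|$ and $|w|$
concentrate around controlled values, reweight the gadget edges so that $m$
is known up to a small constant, or attach parity-preserving padding to
normalize $m$. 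Minor bookkeeping arranges $t\mid n$ (by padding the index
set) and confirms that the Alice-then-Bob edge ordering permits the lower
bound to extend also to the sketching notion mentioned in the introduction.
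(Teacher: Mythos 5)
Your high-level plan matches the paper exactly: reduce from \textsc{Boolean Hidden Hypermatching}, feed Alice's edges then Bob's into the stream to obtain a one-way protocol, and encode each coordinate of $Mx\oplus w$ as the parity of a separate cycle so that the YES/NO promise becomes a bipartite vs.\ all-odd-cycles gap. However, your gadget is genuinely different and the difference is where the proof breaks. You encode $x_i$ as a path of length $1+x_i$, so the total number of edges $m=2n+|x|+|w|$ is a function of the input, and the multiplicative gap $\frac{m-n/t}{m}$ is not a fixed constant. You flag this as ``calibration'' to be handled by concentration, reweighting, or padding, but none of those rescue the stated bound: taking the worst case $m=3n+n/t$ (or even the concentrated value $m\approx\frac52 n$ under a uniform hard distribution) forces $t\lesssim\frac{1}{3\eps}$ rather than $\frac{1}{2\eps}-\frac12$. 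For $\eps$ close to the endpoint $1/5$, this yields $t=1$, i.e.\ an $\Omega(n^0)$ bound, so your construction does not prove the theorem over the claimed range $\eps\in(0,\tfrac15)$, nor the precise exponent $1-1/t$.

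The missing idea is the ``parallel/cross'' doubled gadget. The paper uses two parallel tracks, with vertices $u_{2i-1},u_{2i},v_{2i-1},v_{2i}$ per bit $i$, and encodes $x_i$ by choosing which of the two perfect matchings (parallel or cross) on these four vertices to include --- always exactly two edges regardless of $x_i$, and similarly for Bob's $w_j$. Consequently the edge count is \emph{fixed} at $4n+2n/t$, so $m$ is known in advance. Moreover, each BHH hyperedge produces either \emph{two} odd cycles of length $2t+1$ (losing $2$ edges from the cut) or one even cycle of length $4t+2$: the gap per hyperedge is $2$ out of $4t+2$ edges, i.e.\ gap ratio $\frac{1}{2t+1}$, which yields $\frac{c^*_0}{c^*_1}=\frac{2t}{2t+1}<1-\eps$ exactly when $t<\frac{1}{2\eps}-\frac12$. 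Your gadget gives gap $1$ out of up to $3t+1$ edges, which is strictly worse. So the doubled gadget is not just bookkeeping; it is what achieves both a deterministic $m$ and the tighter ratio needed for the stated $t$. As written, your proposal identifies the right reduction source and the cycle-parity mechanism, but the gadget it proposes does not prove the theorem, and the ``calibration'' step you leave open is exactly where the essential new idea is required.
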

To prove this result, we consider the somewhat stronger \emph{one-way
  two-party communication model}, where instead of arriving as a
stream, the set of edges of a graph is split between two parties, who
engage in a communication protocol to compute (approximately)
the graph's maximum-cut value.
Since a lower bound in this model
immediately translates to the original streaming model,
the theorem above follows immediately from Theorem~\ref{thm:oneway_lb} below.

\subsection{Proof of Theorem \ref{thm:streaming_mc}}

\begin{definition}[$\MC^\eps$] Let $G=(V,E_A\cup E_B)$ be an input
  graph on $|V|=n$ vertices with maximum cut value\footnote{For the
    proof of the lower bound it suffices to restrict our attention to
    unweighted graphs, with all edges having unit weight.} $c^*$, and
  $\eps>0$ some small constant. $\MC^\eps$ is a two player
  communication game where Alice and Bob receive the edges $E_A$ and
  $E_B$ respectively and need to output a value $c'$ such that with
  high probability $(1-\eps)c^* \le c' \le c^*$.
\end{definition}

\begin{theorem} \label{thm:oneway_lb}
  Fix a constant $\eps \in (0,\tfrac 15)$.
  Then the randomized one-way communication complexity of $\MC^\eps$ is
  $\Omega(n^{1-1/t})$ for $t=\lfloor \frac 1{2\eps}-\frac 12\rfloor$.
\end{theorem}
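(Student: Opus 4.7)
The plan is to prove Theorem~\ref{thm:oneway_lb} by reduction from the \emph{Boolean Hidden Hypermatching} problem $\textsc{BHH}_t$ of Verbin and Yu, whose one-way randomized communication complexity is known to be $\Omega(n^{1-1/t})$ (for even $t$). In $\textsc{BHH}_t$, Alice receives a string $x\in\{0,1\}^n$, Bob receives a perfect $t$-uniform hypermatching $M$ on $[n]$ together with a labeling $w\in\{0,1\}^{n/t}$, and they are promised that either $\bigoplus_{i\in e} x_i = w_e$ for every hyperedge $e$ of $M$ (YES case) or $\bigoplus_{i\in e} x_i = 1\oplus w_e$ for every $e$ (NO case); the goal is to distinguish the two cases from a single message of Alice.

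First, I would describe a reduction mapping an instance of $\textsc{BHH}_t$ to an instance of $\MC^\eps$ on $\Theta(n)$ vertices. For each coordinate $i$ of Alice's string, introduce a small \emph{gadget} consisting of two designated terminal vertices $u_i,v_i$ joined by a short path whose length encodes $x_i$ --- say a single edge when $x_i=0$ and a length-$2$ path through one fresh internal vertex when $x_i=1$. Alice places exactly these edges. For each hyperedge $e=\{i_1,\dots,i_t\}$ of Bob's hypermatching, Bob adds $t$ edges that stitch the corresponding gadgets cyclically into a single loop (identifying $v_{i_j}$ with $u_{i_{j+1}}$ cyclically), with an extra bit of length built in to absorb~$w_e$. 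The result is a disjoint union of cycles, one per hyperedge, whose length has the same parity as $w_e \oplus \bigoplus_{j} x_{i_j}$; in particular, each cycle's bipartiteness is controlled precisely by the $\textsc{BHH}_t$ promise.

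Next, I would compute $c^*$ in each case of the promise. In the YES case every cycle has even length, so the graph is bipartite and $c^*=m$ (the total edge count), while in the NO case every cycle has odd length, and any cut omits at least one edge from each odd cycle, so $c^*\le m-n/t$. Hence a $(1-\eps)$-approximation of $\MC$ distinguishes YES from NO whenever $\eps\cdot m < n/t$. Choosing the gadget parameters so that $m=\Theta(tn)$ and solving, any $\eps$ of order $\tfrac{1}{2t+1}$ suffices, matching the setting $t=\lfloor \tfrac{1}{2\eps}-\tfrac12\rfloor$ stated in the theorem; the claimed $\Omega(n^{1-1/t})$ lower bound then follows directly from the communication lower bound for $\textsc{BHH}_t$.

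The main obstacle I anticipate is calibrating the gadget so that the parity-dependent gap aligns with the prescribed relation between $\eps$ and $t$: the total number of edges and the number of ``lost'' edges in the NO case must be tuned, possibly by adjusting gadget lengths or by adding dummy edges, to achieve the tight $\tfrac{1}{2t+1}$ gap rather than a weaker $\Theta(1/t^2)$. A secondary technical point is that the $\textsc{BHH}_t$ hardness is typically stated for even $t$, so one must take the floor carefully and, if necessary, invoke the bound with $t'=2\lfloor t/2\rfloor$; one must also verify that folding $w$ into Bob's side preserves the symmetric promise distribution without weakening the reduction.
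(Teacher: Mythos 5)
Your high-level plan is the same as the paper's: reduce from Boolean Hidden Hypermatching, encode $x_i$ in Alice's gadget, stitch gadgets into cycles hyperedge-by-hyperedge using Bob's edges (with a correction for $w_e$), and exploit the fact that odd cycles lose exactly one edge from the maximum cut. However, your gadget design does not achieve the tight $t = \lfloor \tfrac{1}{2\eps} - \tfrac12 \rfloor$, and a couple of your intermediate calculations are off.

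Concretely: with a gadget of 1 edge for $x_i=0$ and 2 edges for $x_i=1$, both the total edge count $m$ and the individual cycle lengths depend on the Hamming weight of $x$ (and of $w$), not just on the parities. You write $m = \Theta(tn)$, but in fact $m = \Theta(n)$: Alice contributes between $n$ and $2n$ edges, Bob contributes about $n$ stitching edges plus at most $n/t$ correction edges. The NO-case loss is $n/t$, so the gap $\tfrac{n/t}{m}$ is of order $1/t$ (not $1/t^2$), but because $m$ can be as large as roughly $3n + n/t$, your construction only supports $\eps < \tfrac{1}{3t+1}$ rather than $\eps < \tfrac{1}{2t+1}$. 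You anticipated this tuning issue, but it is not merely cosmetic: the theorem statement pins down the exact relation between $\eps$ and $t$, and a variable-size gadget cannot deliver it.

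The paper's fix, which you are missing, is a ``track-doubling'' switch gadget of fixed size. Each coordinate $i$ is assigned \emph{two} parallel tracks (vertices $u_{2i-1},u_{2i},v_{2i-1},v_{2i}$), and Alice always contributes exactly two edges: two ``parallel'' edges if $x_i=0$, two ``cross'' edges if $x_i=1$. Bob routes the two tracks through the hyperedge with $2(t-1)$ internal stitching edges, 2 edges to auxiliary vertices $w_{2j-1},w_{2j}$, and 2 closing edges whose parallel/cross orientation encodes $w_j$. Thus $m = 4n + 2n/t$ is \emph{always} the same. When the total parity of crosses along a hyperedge is even, the two tracks close into two disjoint odd cycles of length exactly $2t+1$; when it is odd, the tracks swap and close into one even cycle of length exactly $4t+2$. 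All cycle lengths are uniform, the even-parity case gives $c^*_0 = 4n$, the odd-parity case gives $c^*_1 = 4n + 2n/t$, and the ratio is exactly $\tfrac{2t}{2t+1}$, which is $< 1-\eps$ precisely for $t = \lfloor \tfrac{1}{2\eps} - \tfrac12 \rfloor$. If you replace your 1-or-2-edge path gadget by this fixed-size switch gadget (and add the auxiliary $w$-vertices to carry Bob's label), your plan goes through exactly as the paper's proof.

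One further small correction: in the paper's parity convention the $Mx \oplus w = 0^{n/t}$ instances give \emph{odd} cycles (lower max cut), and the $1^{n/t}$ instances give the bipartite graph; your write-up has the orientation reversed. This does not affect the argument, but you should state one convention and stick to it.
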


The proof is by a reduction from the following communication problem
studied in \cite{Verbin2011}.
\newcommand{\BHH}{\textsc{BHH}}
\begin{definition}[$\BHH_n^t$]
  The \textsc{Boolean Hidden Hypermatching} problem is a communication
  complexity problem where
  \begin{itemize}\compactify
  \item Alice gets a boolean vector $x\in\bits^n$
    where $n = 2kt$ for some integer $k$,
  \item Bob gets a perfect hypermatching $M$ on $n$ vertices where each
  edge has $t$ vertices and a boolean vector $w$ of length $n/t$.
  \end{itemize}
  Let $Mx$ denote the length-$n/t$ boolean vector $(\bigoplus_{1\le
    i\le t} x_{M_{1,i}},\dots,\bigoplus_{1\le i\le t}x_{M_{n/t,i}})$
  where $(M_{1,1},\dots,M_{1,t})$ $,\dots,$$(M_{n/t,1},\dots,M_{n/t,t})$
  are the edges of M. It is promised that either $Mx\oplus w =
  1^{n/t}$ or $Mx\oplus w = 0^{n/t}$. The problem is to return $1$ in
  the former case, and to return $0$ in the latter.
\end{definition}

\begin{lemma}[{\cite[Theorem 2.1]{Verbin2011}}] The randomized one-way
  communication complexity of $\BHH_n^t$ where $n=2kt$ for some integer
  $k\ge1$ is $\Omega(n^{1-1/t})$.  \label{lem:bhh_lb}
\end{lemma}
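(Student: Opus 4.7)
The plan is to establish this lower bound via a Fourier-analytic argument, following the blueprint of the $\Omega(\sqrt n)$ lower bound for the $t=2$ case (\textsc{Boolean Hidden Matching}). By Yao's minimax principle, it suffices to exhibit a hard input distribution under which every deterministic protocol using $c$ bits of communication errs with constant probability. I would take $x\in\bits^n$ uniform, $M$ a uniformly random perfect $t$-uniform hypermatching on $[n]$, an answer bit $Y\in\bits$ uniform, and $w:=Mx\oplus Y\cdot 1^{n/t}$; the marginal on $(x,M,w)$ is the standard \BHH promise distribution, and Bob's task is to recover $Y$ from $(A(x),M,w)$.

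Any deterministic Alice protocol of $c$ bits partitions $\bits^n$ into $2^c$ message preimages; by averaging, there is a ``typical'' bucket $T\subseteq\bits^n$ of size $|T|\ge 2^{n-c}$ on which to condition, so that the conditional distribution of $x$ is uniform on $T$. Let $g:=\mathbf 1_T/|T|$, with Fourier expansion $g=\sum_S\hat g(S)\chi_S$, where $\chi_S(x)=(-1)^{\sum_{i\in S}x_i}$. A direct calculation shows that, conditioned on Alice's message, $M$, and $w$, the posterior likelihood ratio between $Y=0$ and $Y=1$ is an explicit linear combination of the Fourier coefficients $\hat g(S)$ over subsets $S\subseteq[n]$ that are unions of edges of $M$ (call such $S$ \emph{$M$-aligned}), with the dominant term being $\hat g([n])$ since $\chi_{[n]}(x)=\prod_j\chi_{M_j}(x)$. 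Consequently, Bob's total-variation distinguishing advantage is bounded by $\mathbb{E}_M\sum_{\emptyset\ne S\,M\text{-aligned}}|\hat g(S)|$.

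The bulk of the work is estimating this expectation, which combines two ingredients. First, Parseval yields $\sum_S\hat g(S)^2\le 2^c/2^n$, so the total Fourier mass of $g$ is controlled and, by standard hypercontractivity, concentrated on low Fourier levels when $T$ is large. Second, for a fixed $S\subseteq[n]$ of size $st$, the probability that $S$ is a union of edges of a uniformly random $t$-uniform hypermatching equals $\binom{n/t}{s}/\binom{n}{st}$, which is $n^{-\Omega(s(t-1))}$ when $st\ll n$. Combining the two via Cauchy--Schwarz applied dyadically over $|S|$ shows that the expected advantage is sub-constant unless $c\ge\Omega(n^{1-1/t})$.

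The main obstacle is the delicate balancing in the last step: naive applications lose polynomial factors and only yield a weaker exponent like $1-1/(2t)$; obtaining the tight $1-1/t$ requires carefully exploiting the hypergeometric structure of random $t$-matchings, e.g.\ via a higher-moment argument tracking the joint distribution of Fourier coefficients on $M$-aligned sets of each given size. This is precisely the technical contribution of Verbin and Yu~\cite{Verbin2011}, whose Theorem~2.1 we invoke as a black box.
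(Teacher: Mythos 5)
The paper gives no proof of this lemma at all --- it is imported verbatim as \cite[Theorem 2.1]{Verbin2011} --- and your proposal likewise ends by invoking that theorem as a black box, so the two coincide: both are citations rather than proofs. Your preceding sketch is a fair outline of the Verbin--Yu Fourier-analytic argument (conditioning on a large message bucket, bounding Bob's advantage by Fourier mass on unions of hyperedges, and the hypergeometric estimate $\binom{n/t}{s}/\binom{n}{st}$), but since you concede that the decisive balancing step is exactly their technical contribution, nothing beyond the citation is actually established here.
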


\begin{proof}[Proof of Theorem \ref{thm:oneway_lb}]
  We show a reduction from $\BHH_n^t$ to $\MC^\eps$.
  Consider an instance $(x,M,w)$ of the $\BHH_n^t$ problem: Alice gets
  $x\in\bits^n$, and Bob gets a perfect hypermatching $M$ and a
  vector $w\in\bits^{n/t}$.

  We construct a graph $G$ for the $\MC^\eps$ problem as follows (see
  Figure \ref{fig:bhh_mc_red} for an example):
  \begin{itemize}
  \item The vertices of $G$ are
    $V=\{v_i\}_{i=1}^{2n}\cup\{u_i\}_{i=1}^{2n}\cup\{w_i\}_{i=1}^{2n/t}$.
  \item The edges $E_A$ given to Alice are: for every $i\in[n]$, if
    $x_i=0$, Alice is given two ``parallel'' edges
    $(u_{2i-1},v_{2i-1}),(u_{2i},v_{2i})$; if $x_i=1$, Alice is
    given two ``cross'' edges $(u_{2i-1},v_{2i}),(u_{2i},v_{2i-1})$.
  \item The edges $E_B$ given to Bob are: for each hyperedge
    $M_j=(i_1,i_2,\dots,i_t)\in M$ (where the order is fixed
    arbitrarily):
    \begin{itemize}
    \item For $k=1,2,\dots,t-1$, Bob is given
      $(u_{2i_k-1},v_{2i_{k+1}-1})$ and $(u_{2i_k},v_{2i_{k+1}})$
    \item For $k=t$, Bob is given
      $(u_{2i_t},w_{2j})$ and $(v_{2i_t-1},w_{2j-1})$;
    \item If $w_j=0$ Bob is given two ``parallel'' edges
      $(w_{2j},v_{2i_1})$ and $(w_{2j-1},v_{2i_1-1})$; if $w_j=1$, Bob
      is given two ``cross'' edges $(w_{2j},v_{2i_1-1})$ and
      $(w_{2j-1},v_{2i_1})$
    \end{itemize}
  \end{itemize}

  By definition, for each $j\in[n/t]$, if $M_j=(i_1,i_2,\dots,i_t)\in
  M$ and $(Mx)_j\oplus w_j=0$ we have $\sum_{k=1}^tx_{i_k}\oplus
  w_j=0$. Since the number of $1$ bits in the latter sum is even, when
  we start traversing from $u_{2i_1}$ we go through an even number of
  ``cross'' edges and complete a cycle of length $2t+1$. Similarly
  when starting our traversal at $u_{2i_1-1}$ we complete a different
  cycle of the same length. Therefore if $(x,M,w)$ is a $0$-instance
  the graph consists of $\frac{2n}t$ paths of (odd) length $2t+1$
  each. Therefore the maximum cut value is $c^*_0=2t\cdot\frac {2n}t =
  4n$.

  On the other hand if $(Mx)_j\oplus w_j=1$, starting our traversal at
  $u_{2i_1-1}$, we pass an odd number of cross edges and end up at
  $u_{2i_1}$, from where we once again pass an odd number of cross
  edges, to complete a cycle of total length $2\cdot(2t+1)=4t+2$ that
  ends back in $u_{2i_1-1}$. Therefore, if $(x,M,w)$ is a $1$-instance
  the graph consists of $n/t$ paths of (even) length $4t+2$ each. The
  maximum cut value in this case is $c^*_1=4n+2\frac nt$.

  Observing that $c^*_0/c^*_1=\frac{4n}{4n+2n/t} = \frac {2t}{2t+1} <
  1-\eps$, we conclude that a randomized one-way protocol for
  $\MC^\eps$ (on input size $n'=4n+n/t=O(n)$) gives a randomized
  one-way protocol for $\BHH_n^t$. By Lemma \ref{lem:bhh_lb} the
  Theorem follows.
\end{proof}

\definecolor{myblue}{RGB}{162,213,247}
\definecolor{mygreen}{RGB}{157,252,143}
\definecolor{myred}{RGB}{252,143,175}

\begin{figure}
\begin{center}

\iffigfromfile
    \includegraphics[scale=1]{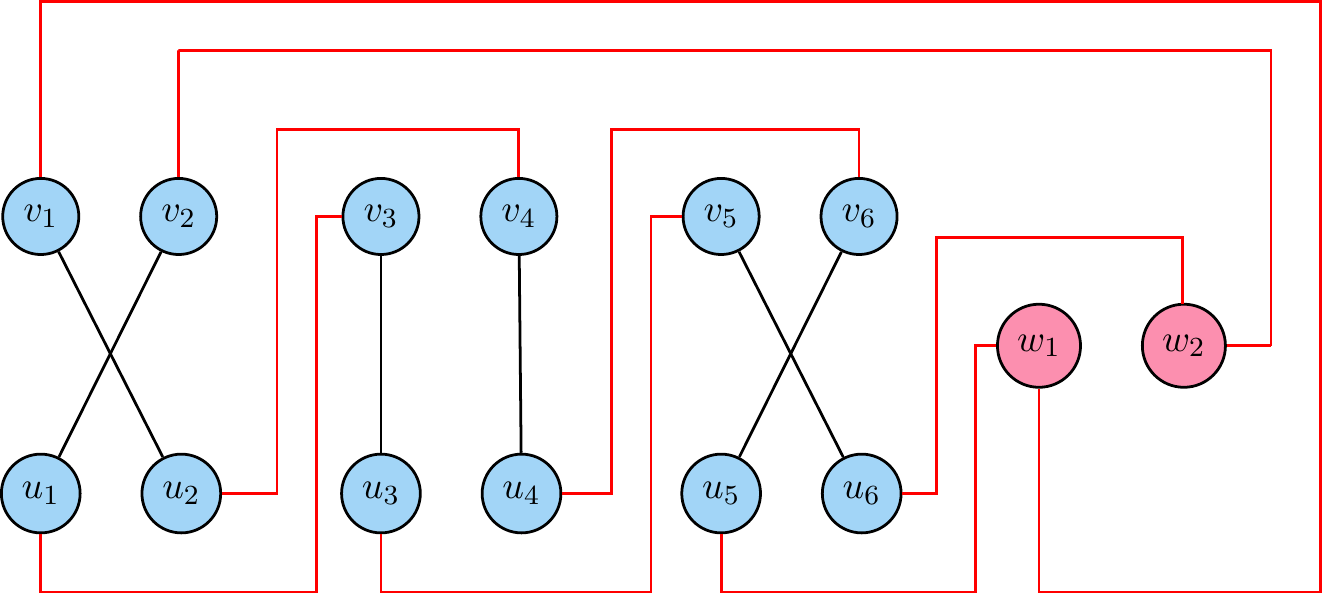}
\else %

\tikzset{external/force remake}
\tikzsetnextfilename{figure-gadget}
\begin{tikzpicture}[thick,
  every node/.style={draw,circle},
  unode/.style={fill=myblue},
  vnode/.style={fill=mygreen},
  wnode/.style={fill=myred},
  every fit/.style={ellipse,draw,text width=2cm},
  bend angle=45
]
\tikzset{external/force remake}
\node [unode] (u1) {$u_1$};
\node [unode,right=0.6cm of u1] (u2) {$u_2$};
\node [unode,above=2cm of u1] (v1) {$v_1$};
\node [unode,right=0.6cm of v1] (v2) {$v_2$};
\node [unode,right=1.2cm of u2] (u3) {$u_3$};
\node [unode,above=2cm of u3] (v3) {$v_3$};
\node [unode,right=0.6cm of u3] (u4) {$u_4$};
\node [unode,right=0.6cm of v3] (v4) {$v_4$};
\node [unode,right=1.2cm of u4] (u5) {$u_5$};
\node [unode,above=2cm of u5] (v5) {$v_5$};
\node [unode,right=0.6cm of u5] (u6) {$u_6$};
\node [unode,right=0.6cm of v5] (v6) {$v_6$};

\node [wnode,above right=0.9cm and 1.2cm of u6] (w1) {$w_1$};
\node [wnode,right=0.6cm of w1] (w2) {$w_2$};

\draw (u1) -- (v2);
\draw (u2) -- (v1);
\draw (u3) -- (v3);
\draw (u4) -- (v4);
\draw (u5) -- (v6);
\draw (u6) -- (v5);

\draw[color=red] (u1.south) |- (2.8,-1) |-(v3.west);
\draw[color=red] (u2.east) -| (2.4,0) -|  (2.4,3.7) -|
(4.85,3.7) -| (v4.north);

\draw[color=red] (u3.south) |- (6.2,-1) |-(v5.west);
\draw[color=red] (u4.east) -| (5.8,0) -|  (5.8,3.7) -|
(8.3,3.7) -| (v6.north);

\draw[color=red] (u5.south) |- (9.5,-1)
|-(w1.west);

\draw[color=red] (u6.east) -| (9.1,0) -| (9.1,2.6) -| (11.6,2.6)
|-(w2.north);

\draw[color=red] (w1.south) |- (13,-1) |- (13,5) |- (0,5) |-
(v1.north);
\draw[color=red] (w2.east) -| (12.5,1.5) -| (12.5,4.5) -|
(1.4,4.5) -| (v2.north);
\end{tikzpicture}

\fi %

  \caption{An example of a gadget constructed in the proof of Theorem
    \ref{thm:oneway_lb} for $t=3$, a matching $M$ that contains the
    hyperedge $M_1=(1,2,3)$, $x_1=1$, $x_2=0$, $x_3=1$ and
    $w_1=0$. The result is two paths of length $7$. Alice's and Bob's
    edges are colored black and red respectively.
    \%vspace{20pt}
   }
\hrulefill
\end{center}
\label{fig:bhh_mc_red}
\end{figure}

\begin{proof}[Proof of Theorem~\ref{thm:oneway_lb}]
  Any streaming algorithm for $\MC^\eps$ leads to a one-way
  communication protocol in the two party setting. Moreover the
  communication complexity of this protocol is exactly the space
  complexity of the streaming algorithm. Hence by Theorem
  \ref{thm:oneway_lb} the streaming space complexity is at least as
  high as the one way randomized communication complexity.
\end{proof}

\subsection{Reporting a Vertex-Bipartition (rather than a value)}
\label{lb_actual_cut}
\newcommand{\sk}{\mathbf{sk}} \newcommand{\est}{\mathbf{est}}

We show a simple $\Omega(n)$ space lower bound for reporting a
vertex-bipartition that gives an approximate maximum cut.

\begin{proposition}
  Let $\eps\in(0,\frac 12)$ be some small constant. Suppose $\sk$ is a
  sketching algorithm that outputs at most $s=s(n,\eps)$ bits, and
  $\est$ is an estimation algorithm, such that together, for every
  $n$-vertex graph $G$, (with high probability) they output a
  vertex-bipartition that gives an approximately maximum cut; i.e.,
  $\est(\sk(G))=S$ such that $w(S,\bar S)\ge(1-\eps)\tilde w$ where
  $\tilde w$ is the maximum cut in $G$. Then $s\ge\Omega_\eps(n)$.
\end{proposition}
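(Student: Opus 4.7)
The plan is an encoding argument: exhibit a family of $2^{\Omega(n)}$ graphs whose approximate maximum cuts essentially reveal their index. For each $x\in\{0,1\}^n$, let $S_x=\{i:x_i=1\}\subseteq[n]$ and let $G_x$ be the complete bipartite graph between $S_x$ and $\bar S_x$ on vertex set $[n]$. Its maximum cut has value $|S_x|\cdot|\bar S_x|$, achieved only by the bipartition $(S_x,\bar S_x)$ (up to complement). I will restrict attention to balanced strings with $|S_x|=n/2$; there are $\binom{n}{n/2}=2^{n-O(\log n)}$ of them, which suffices for the counting step.

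First I would quantify how close any $(1-\eps)$-approximate cut must be to $S_x$. Parameterizing a candidate $A\subseteq[n]$ by $i:=|S_x\setminus A|$ and $j:=|A\setminus S_x|$, direct expansion gives $w(A,\bar A)=(n/2-i)(n/2-j)+ij$, so after dividing by $(n/2)^2$ the approximation ratio equals $1-2(i+j)/n+8ij/n^2$. Setting this $\geq 1-\eps$ and using $ij\leq(i+j)^2/4$, the resulting quadratic inequality in $d:=i+j$ forces either $d\leq \eps n/2+O(\eps^2 n)$ (so $A$ is within Hamming distance $O(\eps n)$ of $S_x$) or $d\geq (1-\eps/2)n-O(\eps^2 n)$ (so $A$ is within Hamming distance $O(\eps n)$ of $\bar S_x$). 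Thus, after possibly complementing, the output $S=\est(\sk(G_x))$ lies within Hamming distance $k=O(\eps n)$ of $S_x$.

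Next I would convert this into the desired sketch-size lower bound by a standard Yao-style encoding argument. Draw $x$ uniformly from balanced length-$n$ strings; by averaging, some fixing of the internal randomness of $(\sk,\est)$ still succeeds on at least a $2/3$-fraction of these $x$'s. The resulting deterministic map $x\mapsto\sk(G_x)\in\{0,1\}^s$, followed by the ``round $\est(\cdot)$ to the nearest balanced string within Hamming distance $k$ or its complement'' decoder, recovers $S_x$ exactly on that $2/3$-fraction. Hence each output string $\sigma\in\{0,1\}^s$ is compatible with at most $2\binom{n}{k}$ balanced inputs, and a counting bound
$$
  2^{s}\cdot 2\binom{n}{k} \;\geq\; \tfrac{2}{3}\binom{n}{n/2}
$$
yields $s\geq (1-H(2k/n))\,n-O(\log n)=\Omega_\eps(n)$, where $H(\cdot)$ is the binary entropy.

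The main technical obstacle I expect is the quadratic analysis in the first step: the ratio $1-2(i+j)/n+8ij/n^2$ has two roots in $d=i+j$, coming from the two symmetric optima $S_x$ and $\bar S_x$, and one must separate them cleanly to produce a Hamming-distance conclusion rather than the weaker statement that $A$ is close to \emph{some} balanced set. Restricting to balanced inputs makes the quadratic explicit and also streamlines the counting; the randomness reduction and the final counting are routine.
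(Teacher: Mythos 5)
Your proof is correct and follows essentially the same strategy as the paper: both construct the complete bipartite graph $G_x$ indexed by a balanced string $x$, and both argue that any $(1-\eps)$-approximate cut must be Hamming-close to $S_x$ or its complement, so the sketch effectively encodes $x$. The paper instantiates the counting via an explicit error-correcting code of relative distance $\eps$ (a packing argument plus an information-theoretic lower bound on the sketch size), whereas you work with all balanced strings directly and close with a ball-counting inequality $2^s\cdot 2\binom{n}{k}\ge\frac{2}{3}\binom{n}{n/2}$; these are interchangeable standard devices, and your quadratic analysis of the cut ratio $1-2d/n+8ij/n^2$ correctly isolates the two near-optimal regimes $d\lesssim\eps n/2$ and $d\gtrsim(1-\eps/2)n$ that the paper handles implicitly through the code's distance.
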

\begin{proof}
  \newcommand {\code}{\mathcal C}
  Let $\code\subset\{0,1\}^n$ be a binary error-correcting code of
  size $|\code|=2^{\Omega(n)}$ with relative distance $\eps$. We may
  assume w.l.o.g. that for every $x\in \code$ the hamming weight $|x|$
  is exactly $n/2$ (for instance by taking $\code'=\{x\bar x: x\in
  \code\}$ where $\bar x$ denotes the bitwise negation of $x$), and
  that there are no $x,y\in \code$ such that $|x-\bar y|\le \frac \eps
  2 n$ (since for every $x\in\code$ there could be at most one ``bad''
  $y$, and we can discard one codeword out of every such pair).

  Fix a codeword $x\in\{0,1\}^n$ and consider the complete bipartite
  graph $G_x=(V,E)$ where $V=[n]$ and $E=\{(i,j):x_i=0\wedge x_j=1\}$.
  The maximum cut value in $G_x$ is obviously $\tilde w = n^2/4$.  Let
  $y\in\{0,1\}^n$ such that $\frac 12\eps n \le |x-y| \le \frac n2$.
  Identifying $x,y$ with subsets $S_x,S_y\subseteq[n]$, and using
  the fact that $|S_x\triangle S_y| = |x-y| \ge\frac12\eps n$, the value of
  the cut $(S_y,\bar S_y)$ in $G_x$ is
  \[
  |E(S_y,\bar S_y)|=\frac {n^2}4- |S_x\setminus S_y| \left(\frac n2 -
    |S_y\setminus S_x|\right) - |S_y\setminus S_x| \left(\frac n2 -
    |S_x\setminus S_y|\right)< (1-\Omega(\eps))\frac {n^2}4
  \]

  Let $\sk(G_x)$ be the sketch of $G_x$, and let $\est(\sk(G_x))=S$ be
  the output of the estimation algorithm on the sketch of
  $G_x$. Therefore if the sketch succeeds (which by our assumption
  happens with high probability) and the cut $(S,\bar S)$ has value at
  least $(1-\Omega(\eps))\tilde w$, then by the preceding argument the
  corresponding vector $x_S$ is of relative hamming distance smaller
  than $\frac\eps2$ from $x$ and then one can decode $x$ from
  $S$.\footnote{Since the cuts $(\bar S,S)$ has the same value as
    $(S,\bar S)$, the vector $x_S$ can actually be $\eps$-close to
    $\bar x$, but by taking our code to have no codeword being close
    to the negation of another codeword we can always try decoding
    both $x_S$ and $\bar x_S$.} By standard arguments from information
  theory, the size $s$ of a sketch that succeeds with high probability
  must be at least $\Omega(\log |C|)=\Omega_\eps(n)$.
\end{proof}

\subsection{$2/3$-Approximation of \MC in the Two Party Model}

We remark that in the one-way two-party model, the parameter range
$\eps\in(0,\frac 15)$ in Theorem \ref{thm:oneway_lb} is tight and not
merely a technical limitation of our analysis. In that model, the
problem of giving a $\frac {2t}{2t+1}$-approximation of the maximum
cut exhibits an exponential gap in the communication complexity
between the case of $t\ge 2$, where we have shown that a polynomial
number of bits is necessary, and the case $t=1$, for which
logarithmically many bits suffice, as follows from the following
simple protocol.

\begin{proposition}
  Let $G=(V,E_A\cupdot E_B)$ be an input graph on $|V|=n$ vertices. Let
  $w_A$ and $w_B$ be the maximum cut values in $G_A=(V,E_A)$ and
  $G_B=(V,E_B)$ respectively. Then it holds for the maximum cut value
  $w$ in $G$
  \[
  \tfrac 23 (w_A+w_B) \le w \le w_A+w_B
  \]
\end{proposition}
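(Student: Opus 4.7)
I would prove the two inequalities separately. The upper bound $w\le w_A+w_B$ is immediate: any bipartition $(S,\bar S)$ of $V$ cuts some edges of $E_A$ and some of $E_B$, contributing at most $w_A$ and $w_B$ to the cut in $G$, respectively, and these contributions add up.

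For the lower bound, my plan is to exhibit three explicit bipartitions of $V$ whose cut values in $G$ sum to at least $2(w_A+w_B)$; by averaging, one of them already achieves $\tfrac23(w_A+w_B)$, and hence so does $w$. Fix maximum-cut partitions $(S_A,\bar S_A)$ and $(S_B,\bar S_B)$ for $G_A$ and $G_B$, and encode each vertex $v$ by the pair $(a_v,b_v)\in\bits^2$ with $a_v=[v\in S_A]$ and $b_v=[v\in S_B]$. The three candidate partitions I consider separate vertices according to the bits $a_v$, $b_v$, and $a_v\oplus b_v$; the last of these is just the symmetric-difference partition $(S_A\triangle S_B,\overline{S_A\triangle S_B})$. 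Let $c_A,c_B,c_\oplus$ denote their respective cut values in $G$.

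The key observation is a one-line parity identity: for every edge $(u,v)\in E$, writing $X=[a_u\neq a_v]$ and $Y=[b_u\neq b_v]$, the contributions of $(u,v)$ to $c_A,c_B,c_\oplus$ are $X$, $Y$, and $X\oplus Y$, whose sum is $0$ when $X=Y=0$ and $2$ in every other case, so in particular it is at least $\max(2X,2Y)$. Summing separately over $E_A$ and $E_B$, and using that $\sum_{e\in E_A}X_e=w_A$ and $\sum_{e\in E_B}Y_e=w_B$ because $(S_A,\bar S_A)$ and $(S_B,\bar S_B)$ are the respective maximum cuts, I obtain
\[
 c_A + c_B + c_\oplus \;\ge\; 2w_A + 2w_B,
\]
so $w\ge\max(c_A,c_B,c_\oplus)\ge \tfrac23(w_A+w_B)$, as desired. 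There is no real obstacle in the argument: the only thing to verify is the $\{0,2\}$-valued identity for $X+Y+(X\oplus Y)$, which is a trivial four-case check.
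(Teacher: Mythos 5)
Your proof is correct and is essentially the same as the paper's: both consider the three bipartitions $S_A$, $S_B$, $S_A\triangle S_B$ and use the observation that each edge cut by $S_A$ (or by $S_B$) is cut by exactly two of the three, so the best of the three cuts has value at least $\tfrac23(w_A+w_B)$. The paper phrases this as an expectation over a uniformly random choice of one of the three cuts, while you sum the three values and average, but the decomposition, the parity observation, and the conclusion are identical.
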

\begin{proof}
  Consider cuts $C_A,C_B:V\rightarrow \{0,1\}$ such that $w(C_A)=w_A$
  and $w(C_B)=w_B$. Let ${C:V\rightarrow \{0,1\}}$ be a cut chosen
  uniformly at random from $\{C_A,C_B,C_A+C_B\}$. For an edge
  $e=(u,v)\in C_A$, either $C_B(u)+C_B(v)=1$ or
  $(C_A+C_B)(u)+(C_A+C_B)(u) = \left( C_A(u)+C_A(v) \right) + \left(
    C_B(u)+C_B(v) \right) = 1 + 0 = 1$. Either way
  $\Pr_{C\in_R\{C_A,C_B,C_A+C_B\}}[e\in C] = \frac 23$. Similarly the
  same holds for an edge $e\in C_B$. Therefore by linearity of
  expectation a random cut in $\{C_A,C_B,C_A+C_B\}$ has value at least
  $\frac 23 (w_A+w_B)$.  The second inequality is trivial.
\end{proof}

\begin{corollary}
  The one-way communication complexity of $\MC^{1/3}$ is $O(\log n)$.
\end{corollary}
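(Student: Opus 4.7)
The plan is to give a deterministic one-way protocol that is essentially immediate from the preceding proposition. Alice, who holds $G_A=(V,E_A)$, computes (without any communication) the maximum cut value $w_A$ of $G_A$. She then sends $w_A$ to Bob. Since $w_A$ is an integer bounded by $\binom{n}{2}$, this costs only $O(\log n)$ bits. Bob, who can similarly compute $w_B$ locally from $G_B=(V,E_B)$, then outputs the estimate $c' := \tfrac{2}{3}(w_A+w_B)$.

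To verify this meets the $\MC^{1/3}$ guarantee, I would invoke both inequalities of the proposition. The lower bound $\tfrac{2}{3}(w_A+w_B)\le w$ (where $w$ is the true maximum cut in $G$) gives $c'\le w = c^*$, so the output never exceeds the optimum. For the other direction, the trivial upper bound $w\le w_A+w_B$ (any cut of $G$ induces cuts of $G_A$ and $G_B$ on the same vertex bipartition) yields $c' = \tfrac{2}{3}(w_A+w_B) \ge \tfrac{2}{3}w = (1-\tfrac{1}{3})c^*$. Together these give exactly the $\MC^{1/3}$ approximation requirement.

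There is no real obstacle here — the corollary is an immediate packaging of the proposition together with the observation that communicating a single integer of magnitude $\poly(n)$ takes $O(\log n)$ bits. The protocol is in fact deterministic, so no error probability analysis is required. I would also remark (for completeness) that one does not need Alice and Bob to actually solve \MC on their subgraphs efficiently; the corollary concerns communication complexity only, so the local computations are unrestricted.
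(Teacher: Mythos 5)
Your proposal is correct and follows essentially the same approach as the paper: Alice sends $w_A$ (an integer bounded by $\binom{n}{2}$, hence $O(\log n)$ bits) and Bob outputs $\tfrac{2}{3}(w_A+w_B)$. The paper's proof is terser — it states the protocol without spelling out the verification — but the argument is identical, and your explicit check that both inequalities of the proposition deliver the $\MC^{1/3}$ guarantee is the right (and only) thing to observe.
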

\begin{proof}
  Alice computes the value $w_A$ and sends it to Bob. Bob computes the
  value $w_B$ and outputs $\frac 23(w_A+w_B)$.
\end{proof}

\section{Sketching Cuts in Hypergraphs}
\label{sec:hypergraphs}

In a celebrated series of works,
Karger~\cite{karger1995random,karger1998better,karger1999random}
and Bencz\'ur and Karger~\cite{benczurkarger1996,BK02} showed
an effective method to sketch the values of \emph{all} the cuts of
an undirected (weighted) graph $G=(V,E,w)$
by constructing a \emph{cut-sparsifier},
which is a subgraph with different edge weights,
that contains only $\tilde O\left(n/\eps^2 \right)$ edges,
and approximates the weight of every cut
in $G$ up to a multiplicative factor of $(1\pm\eps)$.
We generalize the ideas of Bencz\'ur and Karger to obtain cut-sparsifiers of hypergraphs,
as stated below.
Such sparsifiers (and sketches) can be computed by streaming algorithms
that use $\tilde O(rn)$ space for $r$-uniform hypergraphs
using known techniques (of \cite{ag09} and subsequent work).

\begin{theorem} \label{thm:hypersparsifiers}
  For every $r$-uniform%
\footnote{Throughout this work we allow $r$-uniform
    hypergraphs to contain also hyperedges with less than $r$ endpoints
    (for instance by allowing duplicate vertices in the same hyperedge).
}
hypergraph $H=(V,E,w)$ and an error parameter $\eps\in(0,1)$,
there is a subhypergraph $H_\eps$ (with different edge weights) such that:
    \begin{itemize} \compactify
    \item $H_\eps$ has $O\left(n(r+\log n) / \eps^{2} \right)$
      hyperedges.
    \item The weight of every cut in $H_\eps$ is $(1\pm\eps)$ times the
      weight of the corresponding cut in $H$.
    \end{itemize}
\end{theorem}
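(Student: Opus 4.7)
My plan is to generalize the Benczúr--Karger sampling framework from graphs to $r$-uniform hypergraphs. The skeleton is the standard one: sample each hyperedge $e\in E$ independently with probability $p_e$, assign it weight $w_e/p_e$ if kept (so that the expected weight of every cut is preserved exactly), then argue that (i) the expected number of sampled hyperedges is $O(n(r+\log n)/\eps^{2})$, (ii) for any single cut of weight $W$, a Chernoff bound gives tight concentration around $W$, and (iii) the per-cut failure probability is small enough to survive a union bound over all cuts by stratifying them according to their value.

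First I would introduce a hypergraph analog of Benczúr--Karger's \emph{strength}: for each hyperedge $e$, let $k_e$ be the maximum value $k$ such that $e$ belongs to a subhypergraph whose minimum cut is at least $k$. The budget inequality $\sum_{e\in E} w_e/k_e \leq n-1$ should carry over by the standard peeling argument on maximal $k$-strong components. I would then set $p_e=\min\{1, C(r+\log n)/(\eps^{2}k_e)\}$ for a sufficiently large constant $C$, which immediately bounds the expected sparsifier size by $O(n(r+\log n)/\eps^{2})$. For a fixed cut $(S,\bar S)$ of weight $W$, every crossing hyperedge $e$ satisfies $k_e\leq W$ (the cut itself witnesses a subhypergraph whose connectivity is at most $W$), so $p_e=\Omega((r+\log n)/(\eps^{2}W))$ and the weight $w_e/p_e$ of each kept crossing hyperedge is at most $\eps^{2}W/(r+\log n)$. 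A standard Chernoff tail then yields $\Pr[\,|\hat W-W|>\eps W\,]\leq \exp(-\Omega(r+\log n))$, where $\hat W$ is the weight of the cut in the sparsifier.

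The main obstacle is the final union bound, which requires a hypergraph analog of Karger's cut-counting theorem: in an $r$-uniform hypergraph with minimum cut $c$, the number of cuts of weight at most $\alpha c$ must be bounded by roughly $\exp\bigl(O(\alpha(r+\log n))\bigr)$, so that after stratifying cuts by weight in geometric layers $\alpha\in\{1,2,4,\dots,n\}$ and applying the per-cut tail bound, the total failure probability remains $o(1)$. I would attempt this bound by adapting Karger's random-contraction process to hypergraphs: at each step sample a hyperedge with probability proportional to its weight and identify all of its endpoints into a single super-vertex. The delicate point is that a single contraction can eliminate up to $r-1$ vertices, so the recursion is shorter but each step carries a larger failure probability for a fixed $\alpha$-approximate cut to survive; the trick I expect to need is to run the contraction only down to $\Theta(\alpha)$ super-vertices and then enumerate the remaining bipartitions, balancing the two contributions so that the $\log$ of the final count is $O(\alpha(r+\log n))$ rather than $O(\alpha r\log n)$. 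This cut-counting lemma is the most substantive step and also the place where hypergraph structure genuinely diverges from the graph case; once it is in hand, the remaining pieces (strength decomposition, Chernoff, layered union bound) recombine with only bookkeeping changes relative to Benczúr--Karger.
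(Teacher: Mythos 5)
Your plan matches the paper's approach in all essentials: strength-based Bencz\'ur--Karger sampling with $p_e\propto (r+\log n)/(\eps^2 k_e)$, a per-cut Chernoff bound, and a cut-counting theorem proved via a hypergraph generalization of Karger's contraction algorithm, with the target count $\exp\bigl(O(\alpha(r+\log n))\bigr)$ being exactly what the paper establishes (specifically $O(2^{\alpha r}n^{2\alpha})$, and the paper also gives a sunflower-hypergraph example showing the $2^{\Omega(r)}$ factor is unavoidable for $\alpha>1$).

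The one detail that needs correcting is the contraction's stopping threshold: it must be $\alpha r$ super-vertices, not $\Theta(\alpha)$. With $t$ super-vertices remaining, the only available lower bound on the residual total weight comes from the min-degree argument, and since each surviving hyperedge contributes up to $r$ to the degree sum this gives merely $W(H')\ge t\hat w/r$; consequently the single-step probability of contracting a crossing hyperedge is bounded only by $\alpha r/t$, which is useless once $t\le\alpha r$. So the balance point is forced: one contracts down to $\alpha r$ super-vertices and then enumerates all $2^{\alpha r-1}-1$ bipartitions, and this enumeration is precisely where the $2^{\alpha r}$ factor in the count originates. The other subtlety you flagged --- that a single contraction can remove anywhere from $1$ to $r-1$ vertices, so the survival probability is a branching recurrence rather than a clean telescoping product --- is genuine, and the paper resolves it by writing the conditional survival probability as a recurrence, relaxing it to a small linear program over the cardinality distribution of the contracted hyperedge, and lower-bounding the LP's value at a basic feasible solution. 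That LP analysis is the real technical core of the cut-counting theorem and is where most of the work lies; the sampling, strength decomposition, and layered union bound then carry over from the graph case with only bookkeeping changes, as you anticipated.
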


A key combinatorial property exploited in the Bencz\'ur-Karger
analysis is an upper bound on the number of cuts of near-minimum
weight~\cite{karger1993global}.  It asserts that in the number of
minimum-weight cuts in an $n$-vertex graph is at most $n^2$ (which had
been previously shown by \cite{lomonosov1972lower} and
\cite{dinitz1976structure}), and more generally, there are at most
$n^{2\alpha}$ cuts whose weight is at most $\alpha\ge1$ times the
minimum.  Correctly generalizing this property to $r$-uniform
hypergraphs appears to be a nontrivial question.  A fairly simple
analysis generalizes the latter bound to $n^{r\alpha}$, but using new
ideas, we manage to obtain the following tighter bound.
\begin{theorem}\label{thm:numcuts}
  Let $H=(V,E,w)$ be a weighted $r$-uniform hypergraph with $n$
  vertices and minimum cut value $\hat w$. Then for every half-integer
  $\alpha\ge 1$, the number of cuts in $H$ of weight at most $\alpha
  \hat w$ is at most $O(2^{\alpha r}n^{2\alpha})$.
\end{theorem}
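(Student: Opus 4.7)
\medskip
\noindent\textit{Proof plan.} My plan is to adapt Karger's random-contraction algorithm to $r$-uniform hypergraphs. The process repeatedly picks a hyperedge $e$ with probability $w_e/W$ (where $W=\sum_{e\in E}w_e$ is the current total weight) and contracts all endpoints of $e$ into a single supernode. Two invariants carry over from the graph case: (i) the minimum cut value is non-decreasing under contraction, since any cut in the contracted hypergraph corresponds to a cut in $H$ keeping the merged vertex set on a single side, and hence has weight at least $\hat w$; and (ii) when $k$ supernodes remain, summing weighted degrees yields $\sum_v\deg(v)\ge k\hat w$, while at the same time $\sum_v\deg(v)\le rW$ since each hyperedge contributes to at most $r$ distinct supernodes, so $W\ge k\hat w/r$.

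Fix any cut $C$ with $w(C)\le\alpha\hat w$. The probability of ``destroying'' $C$ at one step (by contracting a hyperedge that crosses $C$) is at most $w(C)/W\le\alpha r/k$. I would run the process until $k$ first drops to $\alpha r$; the half-integrality of $\alpha$ makes $\alpha r$ an integer, and the terminal hypergraph admits at most $O(2^{\alpha r})$ bipartitions. Denoting by $P$ a uniform lower bound on the survival probability of any fixed $\alpha$-cut, the number of $\alpha$-approximate cuts in $H$ is at most $O(2^{\alpha r})/P$, so the goal reduces to proving $P\ge\Omega(n^{-2\alpha})$.

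The naive product $P\ge\prod_{k=\alpha r+1}^n(1-\alpha r/k)\ge 1/\binom{n}{\alpha r}$ only gives the weaker $n^{O(\alpha r)}$ bound---this is the ``simple analysis'' alluded to before the statement. The improvement leverages the fact that a contraction of a hyperedge with all $r$ endpoints in distinct supernodes reduces $k$ by $r-1$ in a single step; in this best case only about $(n-\alpha r)/(r-1)$ terms appear in the product, which telescopes to $P\ge(\alpha r/n)^{\alpha r/(r-1)}\ge\Omega(n^{-2\alpha})$, since $\alpha r/(r-1)\le 2\alpha$ for $r\ge 2$.

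The main obstacle is handling ``degenerate'' hyperedges that arise as the process proceeds (with fewer than $r$ distinct endpoints due to prior merges): their contraction reduces $k$ by less, while the destruction probability need not shrink correspondingly. The plan to overcome this is a refined amortized analysis tracking the weighted-average distinct-endpoint count $\bar r_i$ at each step $i$: the tighter bound $W\ge k\hat w/\bar r_i$ gives destruction probability at most $\alpha\bar r_i/k$, while the expected reduction at that step equals $\bar r_i-1$; the ratio of these two quantities remains $O(\alpha/k)$ throughout, so telescoping the product (using $\ln(1-p_i)\ge-p_i/(1-p_i)$ to control the logarithm) should preserve the bound $P\ge\Omega(n^{-2\alpha})$ even when hyperedges degenerate. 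Combined with the $O(2^{\alpha r})$ surviving cuts at termination, this yields the claimed $O(2^{\alpha r}n^{2\alpha})$.
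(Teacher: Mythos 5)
Your high-level plan --- generalize Karger's contraction algorithm to hypergraphs, run it down to $\alpha r$ supernodes, lower-bound the survival probability of a fixed near-minimum cut, and then multiply by the $O(2^{\alpha r})$ terminal bipartitions --- is exactly the paper's approach (Algorithm 1 and Theorem \ref{thm:probcut}). You also correctly identify the central obstacle: contractions reduce the supernode count by a variable amount between $1$ and $r-1$, and ``degenerate'' hyperedges with few distinct endpoints make the clean $(r-1)$-per-step telescope unavailable.

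However, the proposed resolution --- an amortized argument based on the observation that the ratio (destruction probability)/(expected reduction) stays $O(\alpha/k)$ --- is left as a sketch, and it does not actually close the gap. Having a good ratio in expectation at each step does not by itself give a product bound: the number of steps is random, the per-step reductions are random, and the quantity one needs to control is the conditional survival probability as a function of the current supernode count, which satisfies a recurrence of the form $q(k)\ge\sum_i (p_i-y_i)\,q(k-i+1)$ subject to $\sum p_i=1$, $0\le y_i\le p_i$, and $\sum y_i\le\tfrac{\alpha}{k}\sum i\,p_i$. To run the telescope you need a potential $Q_k$ with $\sum_i(p_i-y_i)Q_{k-i+1}\ge Q_k$ for \emph{every} feasible $(p,y)$. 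The natural choice $Q_k\propto k^{-2\alpha}$ (which is what your $(\alpha r/n)^{\alpha r/(r-1)}$ heuristic implicitly targets) fails this inequality: already in the worst extreme point of the LP, with almost all mass on $2$-edges and a tiny cut-crossing mass on $r$-edges, one needs $\bigl(\tfrac{k}{k-1}\bigr)^{2\alpha}\ge\tfrac{k}{k-2\alpha}$, which is false for every $\alpha\ge1$ (e.g.\ for $\alpha=1$ it reads $k(k-2)\ge(k-1)^2$). The paper avoids this by choosing the sharper potential $Q_t=\tfrac{2\alpha+1}{r+1}\binom{t-\alpha(r-2)}{2\alpha}^{-1}$ --- a falling-factorial expression, not a pure power, with the shift by $\alpha(r-2)$ essential --- and then proving the recurrence inequality by reducing to a two-variable LP at basic feasible solutions and checking the cases (Lemma \ref{lem:bound_qn}). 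That LP/induction step is the real technical content of the theorem, and it is missing from your proposal.
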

We prove this ``cut-counting'' bound in Section~\ref{sec:NearMinH}.
With this bound at hand, we prove Theorem~\ref{thm:hypersparsifiers}
similarly to the original proof of~\cite{benczurkarger1996} for
graphs, as outlined in Subsection~\ref{apx:fullkarger}.

\medskip Cuts in hypergraphs are perhaps one of the simplest examples
of CSPs, with the hypergraph vertices becoming boolean variables, and
the hyperedges becoming constraints defined by the predicate
\textsc{Not-All-Equal}. A natural question is whether general CSPs
admit sketches as well, where a sketch should provide an approximation
to the value of every assignment to the CSP (as usual, the value of an
assignment is the number of constraints it satisfies).  Although we
are still far from answering this question in full generality, we
prove that for the well-known SAT problem, sketching is indeed
possible.
\begin{theorem}
  \label{thm:sat}
  For every error parameter $\eps\in(0,1)$, there is a sketching
  algorithm that produces from an $r$-\textsc{CNF} formula $\Phi$ on
  $n$ variables a sketch of size $\tilde O(rn/\eps^2)$, that can be
  used to $(1\pm\eps)$-approximate the value of every assignment to
  $\Phi$.
\end{theorem}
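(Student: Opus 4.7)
The plan is to reduce the sketching problem to hypergraph cut sparsification (Theorem~\ref{thm:hypersparsifiers}) by encoding the formula $\Phi$ as a hypergraph $H_\Phi$ in which each truth assignment corresponds to a specific vertex bipartition, and the weight of that cut equals the number of satisfied clauses. Once this correspondence is in place, sparsifying $H_\Phi$ immediately yields the desired sketch of $\Phi$.

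More concretely, I would construct $H_\Phi$ on $2n+1$ vertices: a dummy vertex $v_*$, plus two ``polarity'' vertices $v_i^+$ and $v_i^-$ for every variable $x_i$. An assignment $\tau:[n]\to\{0,1\}$ is encoded by the cut
$$ S_\tau \;=\; \{v_i^+ : \tau(x_i)=1\} \cup \{v_i^- : \tau(x_i)=0\}, $$
so that $v_*\in\bar S_\tau$ and exactly one of $v_i^+,v_i^-$ lies in $S_\tau$ for every $i$. For each clause $C=\ell_1\vee\cdots\vee\ell_r$, define $v(\ell_j)=v_{i_j}^+$ if $\ell_j=x_{i_j}$ and $v(\ell_j)=v_{i_j}^-$ if $\ell_j=\bar x_{i_j}$, and add to $H_\Phi$ the hyperedge
$$ e_C \;=\; \{v_*\}\cup\{v(\ell_1),\dots,v(\ell_r)\} $$
of size $r+1$ and unit weight. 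Since $v_*\in\bar S_\tau$, the hyperedge $e_C$ is cut by $(S_\tau,\bar S_\tau)$ iff some $v(\ell_j)\in S_\tau$ iff some literal of $C$ is true under $\tau$, i.e.\ iff $C$ is satisfied. Summing over clauses gives $\operatorname{val}_\Phi(\tau) = w_{H_\Phi}(S_\tau,\bar S_\tau)$ exactly.

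For the sketch itself, I would invoke Theorem~\ref{thm:hypersparsifiers} on the $(r+1)$-uniform hypergraph $H_\Phi$ (with $n'=2n+1$ vertices) to obtain a subhypergraph $H_\eps$ with $O(n'(r+1+\log n')/\eps^2)=\tilde O(rn/\eps^2)$ weighted hyperedges, and store $H_\eps$ (together with the vertex labelling $i\mapsto (v_i^+,v_i^-)$) as the sketch. Given any $\tau$, the estimator simply outputs $w_{H_\eps}(S_\tau,\bar S_\tau)$. By the sparsifier guarantee this lies in $(1\pm\eps)\cdot w_{H_\Phi}(S_\tau,\bar S_\tau)=(1\pm\eps)\operatorname{val}_\Phi(\tau)$, as required.

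The only genuine design issue is the encoding: a hypergraph cut is symmetric under swapping the two sides of the partition, while literal polarity is not, so a naive hyperedge on $\{v_{i_1},\dots,v_{i_r}\}$ cannot distinguish ``satisfied'' from ``all literals false.'' The doubled polarity vertices together with the pinned dummy $v_*$ break this symmetry, forcing exactly one vertex per variable to lie in $S_\tau$ and orienting ``true'' toward $S_\tau$. Beyond this observation, the argument is entirely a black-box reduction: no extra approximation loss is incurred because $\operatorname{val}_\Phi(\tau)$ is \emph{exactly} a cut value of $H_\Phi$, so the multiplicative cut-sparsifier guarantee transfers directly to a multiplicative guarantee for every assignment.
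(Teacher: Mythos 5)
Your proposal is correct and takes essentially the same approach as the paper's Lemma~\ref{lem:sat2hg}: your polarity vertices $v_i^+, v_i^-$ are the paper's literal vertices $x_i, \neg x_i$, your dummy $v_*$ is the paper's $F$, and the clause-to-hyperedge encoding, the cut $S_\tau$, and the exact equality $\operatorname{val}_\Phi(\tau)=w_{H_\Phi}(S_\tau,\bar S_\tau)$ all match. Invoking Theorem~\ref{thm:hypersparsifiers} on the resulting $(r+1)$-uniform hypergraph is precisely how the paper then derives Theorem~\ref{thm:sat}.
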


\subsection{Near-Minimum Cuts in Hypergraphs}
\label{sec:NearMinH}

In this subsection we prove our upper bound on the number of
near-minimum cuts (Theorem \ref{thm:numcuts}).  We generalize Karger's
min-cut algorithm to hypergraphs, and then show that its probability
to output any individual cut is not small (Theorem~\ref{thm:probcut}),
which immediately yields a bound on the number of distinct cuts.
Finally, we show that the exponential dependence on $r$ in
Theorem~\ref{thm:numcuts} is necessary (Section~\ref{sec:CutsLB}).

\subsubsection{A Randomized Contraction Algorithm}
Consider the following generalization of Karger's contraction
algorithm \cite{karger1993global} to hypergraphs.

\begin{algorithm}
  \caption{\textsc{ContractHypergraph}}
  \label{alg:contract}
\begin{algorithmic}[1]
  \Require
  \Statex an $r$-uniform weighted hypergraph $H=(V,E,w)$
  \Statex a parameter $\alpha>1$
  \Ensure a cut $C=(S,V\setminus S)$
  \State $H'\gets H$
  \While{$|V(H')| > \alpha r$ }
  \State{$e\gets$ random hyperedge in $H'$ with probability proportional to its
    weight} \State{contract $e$ by merging all its endpoints and
    removing self-loops\footnotemark}
  \EndWhile
  \State{$C'\gets $ random cut in $H'$
(bipartition of $V(H')$)} \label{alg:finalstep}
  \State{\textbf{return} the cut $C$ in $H$ induced by the cut $C'$}
\end{algorithmic}
\end{algorithm}
\footnotetext{Here by self-loops we refer to hyperedges that contain
  only a single vertex. Note also that the cardinality of an edge can
  only decrease as a result of contractions.}
\begin{theorem} \label{thm:probcut} Let $H=(V,E,w)$ be a weighted
  $r$-uniform hypergraph with minimum cut value $\hat w$, let $n=|V|$,
  and let $\alpha\ge 1$ be some half-integer. Fix $C=(S,V\setminus S)$
  to be some cut in $H$ of weight at most $\alpha \hat w$. Then
  Algorithm \ref{alg:contract} outputs the cut $C$ with probability at
  least $\frac {Q_{n,r,\alpha}}{2^{\alpha r-1}-1}$ for
  \[
  Q_{n,r,\alpha}= \frac {2\alpha+1}{(r+1)}{{n-\alpha(r-2)} \choose
    2\alpha}^{-1}
  \]
\end{theorem}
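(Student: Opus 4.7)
The plan is to factor the probability that Algorithm~\ref{alg:contract} outputs $C$ as $P_1\cdot P_2$, where $P_1$ is the probability that no hyperedge crossing $C$ is contracted during the while-loop, and $P_2$ is the conditional probability that the final random cut in $H'$ induces $C$ on the remaining vertices. For $P_2$, observe that if no $C$-edge was ever contracted then every surviving vertex of $H'$ corresponds to a set of original vertices lying entirely on one side of $C$, so $C$ induces a unique non-trivial bipartition of the $k^*\le\alpha r$ remaining vertices; a uniformly random non-trivial bipartition matches it with probability $1/(2^{k^*-1}-1)\ge 1/(2^{\alpha r-1}-1)$. This accounts for the denominator in the target bound, leaving the task of showing $P_1\ge Q_{n,r,\alpha}$.

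For the per-step survival bound, I would show that whenever $H'$ has $k$ vertices, the next contracted hyperedge crosses $C$ with probability at most $\alpha r/k$. Contractions cannot decrease the minimum cut of $H'$, so every vertex of $H'$ has weighted degree at least $\hat w$; summing gives $\sum_e w_e|e|\ge k\hat w$, and since $|e|\le r$ we obtain total edge weight $W\ge k\hat w/r$. Since the total weight of $C$-edges remaining is still $w(C)\le\alpha\hat w$ (no $C$-edge has been contracted by assumption), the probability of next hitting a $C$-edge is at most $w(C)/W\le\alpha r/k$, giving per-step survival probability at least $1-\alpha r/k$.

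Aggregating these per-step estimates into the stated closed form $Q_{n,r,\alpha}=\frac{2\alpha+1}{r+1}\binom{n-\alpha(r-2)}{2\alpha}^{-1}$ is the substantive combinatorial work. A direct product over a worst-case sequence in which $|V(H')|$ drops by exactly one at each step gives only $\binom{n}{\alpha r}^{-1}$, which for $r>2$ is weaker than the target by a factor polynomial in $n$; the needed improvement must exploit that every contraction of a hyperedge of size $s$ reduces $|V(H')|$ by $s-1$, so vertex counts can skip ahead by up to $r-1$ per step. I would proceed by induction on $n$ with inductive hypothesis $P_1(n)\ge Q_{n,r,\alpha}$, reducing the inductive step to an inequality of the form $(1-\alpha r/n)\cdot E_d[Q_{n-d,r,\alpha}]\ge Q_{n,r,\alpha}$ for the random first-step decrement $d\in\{1,\dots,r-1\}$. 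The hard part will be the algebraic verification that this closes: the ratio $Q_{n,r,\alpha}/Q_{n-d,r,\alpha}$ involves falling factorials that do not combine cleanly with the linear factor $(1-\alpha r/n)$ for a single $d$, so a successful argument will likely need to lower-bound the expected decrement $E[d]$ (again via $\sum_e w_e|e|\ge k\hat w$, which forces large hyperedges to appear on average) in order to average favorably over $d$.
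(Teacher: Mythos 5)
Your high-level plan is the same as the paper's: factor the output probability as ``survive the while-loop'' times ``pick the right bipartition at the end,'' bound the second factor by $1/(2^{\alpha r-1}-1)$, and then prove by induction that the survival probability is at least $Q_{n,r,\alpha}$. The $P_2$ bound is correct, and you correctly diagnose that a naive per-step product of $(1-\alpha r/k)$ yields only $\binom{n}{\alpha r}^{-1}$ and hence misses the target for $r>2$, and that the slack has to come from contractions decreasing the vertex count by more than $1$.

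The genuine gap is in the part you defer as ``the hard part.'' Your per-step bound $\Pr[\text{hit } C] \le \alpha r/k$ is obtained by replacing $|e|$ with $r$ in $\sum_e w_e|e| \ge k\hat w$, i.e.\ using $W \ge k\hat w/r$. But the paper does \emph{not} make this replacement. It keeps the exact quantity and uses the tighter constraint $\sum_i y_i \le \frac{\alpha}{t}\sum_i i\,p_i$, where $p_i$ is the probability the contracted hyperedge has cardinality $i$ and $y_i$ the probability that it has cardinality $i$ \emph{and} crosses $C$. This matters because the survival probability and the decrement distribution are driven by the \emph{same} underlying cardinality distribution $\{p_i\}$: if most weight is on small hyperedges, survival is higher but $|V(H')|$ drops slowly; if most weight is on large hyperedges, survival is lower but $|V(H')|$ drops quickly. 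Your proposed inequality $(1-\alpha r/n)\cdot \mathbb{E}_d[Q_{n-d}] \ge Q_n$ discards exactly this coupling --- it charges the worst-case failure probability for every step while trying to average over decrements separately, and averaging over $\mathbb{E}[d]$ cannot recover the lost factor. The paper closes the induction by writing the recurrence $q_t \ge \sum_{i=2}^r (p_i-y_i)\, Q_{t-i+1}$ and then minimizing this over the feasible polytope of $(p_i,y_i)$ subject to $\sum p_i = 1$, $0 \le y_i \le p_i$, and $\sum y_i \le \frac{\alpha}{t}\sum i\,p_i$. The LP has an optimal basic solution supported on at most two cardinalities $i,j$, and a case analysis on which bound constraints are tight gives the claimed $Q_{t,r,\alpha}$. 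That LP (or an equivalent coupled optimization) is the missing idea; without it, the inductive step as you have framed it does not close.
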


Since Theorem \ref{thm:probcut} gives a lower bound on the probability
to output a specific cut (of certain weight), and different cuts
correspond to disjoint events, it immediately implies an upper bound
on the possible number of cuts of that weight, proving Theorem
\ref{thm:numcuts}.

\begin{proof}
  Fix $C=(S,V\setminus S)$ to be some cut of weight $\alpha \hat w$ in
  $H$. For $t=1,\dots,n$, denote by $I_t$ the iteration of the
  algorithm where $H'$ contains $t$ vertices. Since a contraction of a
  hyperedge may reduce the number of vertices by anywhere between $1$
  and $r-1$, in a specific execution of the algorithm, not necessarily
  all the $\{I_t\}_{t=1}^n$ occur. Similarly, let the random variable
  $E_t$ be the edge contracted in iteration $t$.

  We say that an iteration $I_t$ is \emph{bad} if $E_t\in C$ (i.e., the
  hyperedge contains vertices from both $S$ and $V\setminus
  S$). Otherwise, we say it is \emph{good} (including iterations that
  do not occur in the specific execution such as $I_1,\dots,I_{\alpha
    r}$). For any fixed $e_n,\dots,e_{t+1}\in E$ define
  \begin{align*}
    q_t(e_n,\dots,e_{t+1}) &= \Pr\left[ I_t,I_{t-1},\dots,I_1 \text{ are
        good}|E_n=e_n,\dots,E_{t+1}=e_{t+1}\right]
  \end{align*}
  Note that $q_n$ is simply the probability that all iterations of the
  algorithm are good i.e., no edge of the cut $C$ is contracted. When
  that happens, in step \ref{alg:finalstep} of the algorithm, there
  exists a cut $C'$ in $H'$ that corresponds to the cut $C$ in
  $H$. Since at that stage, there are at most $\alpha r$ vertices in
  $H'$, the probability of choosing $C'$ is at least $\frac
  1{2^{\alpha r-1}-1}$. Hence the overall probability of outputting
  cut $C$ is at least $q_n\cdot\frac 1{2^{\alpha r-1}-1}$. We thus
  need to give a lower bound on $q_n$. To this end we
  prove the following lemma.
  \begin{lemma}\label{lem:bound_qn}
    $q_t(e_n,\dots,e_{t+1}) \ge Q_{t,r,\alpha}$ for every $t=\alpha
    r,\dots,n$, and every $e_n,\dots,e_{t+1}\in E\setminus C$.
  \end{lemma}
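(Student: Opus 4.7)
The plan is to prove Lemma~\ref{lem:bound_qn} by (reverse) induction on $t$, starting from the base case $t = \alpha r$ and propagating the bound upward to $t = n$. Since the algorithm halts as soon as $|V(H')|\le \alpha r$, we have $q_{\alpha r} = 1$, and one checks directly that $Q_{\alpha r,r,\alpha} = (2\alpha+1)/(r+1) \le 1$ in the regime of interest (otherwise the stated bound would already be vacuous).

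The key per-step observation is that, conditioned on no edge of $C$ having yet been contracted, if $H'$ has $t$ vertices then the probability that the next contracted edge lies in $C$ is at most $\alpha r/t$. This uses two facts about the current hypergraph: every vertex $v \in V(H')$ satisfies $\deg_{H'}(v) \ge \hat w$ (its singleton cut in $H'$ corresponds to a genuine cut of $H$, whose weight is at least the global minimum $\hat w$), so $\sum_v \deg_{H'}(v) \ge t\hat w$; and each hyperedge of $H'$ has size at most $r$, so $\sum_v \deg_{H'}(v) = \sum_e w_e|e| \le rW$. Combining gives $W \ge t\hat w/r$, and together with $w(C) \le \alpha\hat w$ this yields $\Pr[E_t \in C] \le \alpha r/t$.

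For the inductive step, conditioning on which edge is contracted next gives
\[
q_t \;=\; \sum_{e\notin C}\frac{w_e}{W}\, q_{t-|e|+1},
\]
and the induction hypothesis supplies $q_{t-|e|+1} \ge Q_{t-|e|+1,r,\alpha}$. Writing $M := t - \alpha(r-2)$ and $\psi_k := \binom{M}{2\alpha}/\binom{M-k+1}{2\alpha}$, the inductive step reduces to the inequality $\sum_{e\notin C} w_e\,\psi_{|e|} \ge W$, which (using $\psi_k \ge 1$) is equivalent to
\[
\sum_{e\notin C} w_e(\psi_{|e|}-1) \;\ge\; w(C).
\]

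The main obstacle is this last binomial inequality. The naive linear approximation $\psi_k - 1 \ge 2\alpha(k-1)/(M-2\alpha)$, followed by $\sum_e w_e(|e|-1) \ge t\hat w - W$, loses too much because $W$ can far exceed the trivial lower bound $t\hat w/r$ (for instance in a star graph). My plan is to instead use the exact telescoping identity $\binom{M}{2\alpha} - \binom{M-k+1}{2\alpha} = \sum_{j=1}^{k-1}\binom{M-j}{2\alpha-1}$, which rewrites $\psi_k-1$ as a sum whose terms can be matched against the hyperedge degree accounting $\sum_e w_e |e| \ge t\hat w$, and then to combine with $w(C)\le \alpha\hat w$ to close the induction. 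The peculiar shift $M = t - \alpha(r-2)$ and the prefactor $(2\alpha+1)/(r+1)$ in $Q_{t,r,\alpha}$ should emerge as exactly the choice that makes this inequality tight at the boundary $t = \alpha r$, at the cost of leaving some slack for large $t$.
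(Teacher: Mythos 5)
Your setup matches the paper's: the same base case $q_{\alpha r}=1$, the same recursion $q_t \ge \sum_{e\notin C}\frac{w_e}{W}\,q_{t-|e|+1}$, and you correctly identify the crux as showing $\sum_{e\notin C}\frac{w_e}{W}\,Q_{t-|e|+1}\ge Q_t$, i.e.\ $\sum_{e\notin C}w_e(\psi_{|e|}-1)\ge w(C)$. In the paper's notation this is precisely the statement that the value of the LP $\min \sum_i (p_i-y_i)Q_{t-i+1}$ subject to $\sum_i p_i=1$, $0\le y_i\le p_i$, $\sum_i y_i\le \tfrac{\alpha}{t}\sum_i i\,p_i$ is at least $Q_t$; the paper proves that by observing that a basic optimal solution has at most two nonzero $p_i$'s (vertex analysis of the polytope) and then dispatching three explicit subcases via a short binomial-ratio computation. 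Your proposal is a different route, and it is not actually carried out.

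The part you flag as ``the main obstacle'' is exactly where the proof is. Your telescoping identity $\binom{M}{2\alpha}-\binom{M-k+1}{2\alpha}=\sum_{j=1}^{k-1}\binom{M-j}{2\alpha-1}$ is true, but it does not by itself yield a bound strong enough to close the induction: the terms $\binom{M-j}{2\alpha-1}/\binom{M-k+1}{2\alpha}$ carry a $k$-dependent denominator, and once you replace $\psi_k-1$ by any uniform lower bound of the form $c\cdot(k-1)$ the resulting inequality $c\sum_{e\notin C}w_e(|e|-1)\ge w(C)$ can actually fail (the identity $\sum_{e\notin C}w_e(|e|-1)\ge t\hat w-W-(r-1)w(C)$ has the wrong sign when $W$ is large). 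The reason the statement is nevertheless true is that the two quantities you need are coupled: when $W$ is large the selection probability $w(C)/W$ is small, and when the average edge size $\sum_i i\,p_i$ is small (so $\psi_{|e|}-1$ terms are small) the constraint $\sum_i y_i\le \tfrac{\alpha}{t}\sum_i i\,p_i$ is also tighter. Your weaker standalone bound $\Pr[E_t\in C]\le \alpha r/t$ throws away exactly this coupling. The paper's LP vertex argument is precisely the mechanism that tracks the coupling: it reduces to at most two active edge sizes $i,j$, at which point the refined constraint becomes a linear equation and the needed binomial inequality becomes clean. Unless you reintroduce that structure (or find an equally sharp convexity/majorization argument that keeps $\psi_k$ intact rather than linearizing it), the plan as written does not go through.
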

  Using the lemma for $t=n$ bounds the overall probability of
  outputting cut $C$ and proves Theorem \ref{thm:probcut}.
\end{proof}

\subsubsection{Proof of Lemma \ref{lem:bound_qn}}
By (complete) induction on $t$. For the base case, note that
$q_t(e_n,\dots,e_{t+1})=1$ for $t=1,\dots,\alpha r$ since no
contractions take place in those iterations.

For the general case, fix an iteration $I_t$ and from now on,
condition on some set of values ${E_n=e_n,\dots,E_{t+1}=e_{t+1}}$. All
probabilities henceforth are thus conditioned, and for brevity we omit
it from our notation. Observe that depending on the cardinality of
$E_t$, the next iteration (\emph{after} iteration $I_t$) may be one of
$I_{t-1},\dots,I_{t-r+1}$. Let $p_i=\Pr[|E_t|=i]$ and let
$y_i=\Pr\left[|E_t|=i\wedge E_t\in C\right]$.\footnote{Since not all
  iterations occur in all executions, it might be the case that
  \emph{no} edge is contracted in iteration $t$. However, in that case
  iteration $t$ is good, and hence by the induction hypothesis the
  claim holds.}$^{,}$\footnote{Note that $|e|$ refers to the edge's
  cardinality, whereas $w(e_i)$ refers to its weight.}  We can now
write a recurrence relation:
\begin{align*}
  q_t(e_n,\dots,e_{t+1})&=\Pr[I_t,\dots,I_1\text{ are good}| E_n=e_n,\dots,E_{t+1}=e_{t+1}] \\
  &= \sum_{i=2}^r \Pr\left[|E_t|=i\wedge E_t\notin C\right] \cdot
  \Pr[I_{t-i+1},\dots,I_1\text{ are good}||E_t|=i, E_t \notin C]  \\
  &= \sum_{i=2}^r (p_i-y_i)
  \E_{E_t}\left[q_{t-i+1}(e_n,\dots,e_{t+1},E_t)| |E_t|=i\wedge E_t\notin C\right]\\
  &\ge \sum_{i=2}^r (p_i-y_i)Q_{t-i+1,r,\alpha}
\end{align*}

For $i=2,\dots,r$ let $W_i=\sum_{e'\in H':|e'|=i} w(e')$ be the total
weight of hyperedges in $H'$ of cardinality $i$ (at iteration $t$) and
let $W=\sum_{i=2}^rW_i$ be the total weight in $H'$.

Observe that $p_i=\frac {W_i}W$ since $E_t$ is chosen with probability
proportional to the hyperedge's weight, and $\sum_{v\in V'}deg(v) =
\sum_{i=2}^ri\cdot W_i$ since a hyperedge of cardinality $i$ is
counted $i$ times on the lefthand side. By averaging, there exists a
vertex $v\in V(H')$ such that $deg(v)\le \frac 1t \sum_{i=2}^ri\cdot
W_i$, and since it induces a cut in $H$ whose weight is
exactly $deg(v)$, we obtain that $\hat w \le deg(v) \le \frac 1t
\sum_{i=2}^ri\cdot W_i$.

Next note that
\[
\sum_{i=2}^ry_i=\Pr[E_t\in C] \le \frac {\alpha \hat w}W \le
\tfrac \alpha t \sum_{i=2}^ri\cdot \frac {W_i}W = \tfrac \alpha t
\sum_{i=2}^ri\cdot p_i
\]
where the first inequality uses the conditioning on all previous
iterations being good, which means that all hyperedges in $C$
have survived in $H'$, and thus $w_H(C)=w_{H'}(C)$.

Altogether, to prove the lemma it suffices to show that the value of
the following linear program is at least $Q_{t,r,\alpha}$ (and from
now on we omit the subscripts $r$ and $\alpha$, denoting $Q_t=Q_{t,r,\alpha}$).

\begin{alignat*}{3}
  \text{minimize } &\sum_{i=2}^r(p_i-y_i)Q_{t-i+1} \\
  \text{subject to } &
  0\le y_i \le p_i \qquad\qquad \forall i=2,\dots,r\\
  &\sum_{i=2}^rp_i = 1  \\
  &\sum_{i=2}^r y_i \le \tfrac \alpha t \sum_{i=2}^ri\cdot p_i .
\end{alignat*}
First observe that the last constraint implies
\begin{equation} \label{eq:psmallerthany}
\sum_{i=2}^r y_i \le \tfrac \alpha t \sum_{i=2}^ri\cdot
p_i \le \tfrac \alpha t \sum_{i=2}^rr\cdot
p_i = \tfrac {\alpha r}t \sum_{i=2}^rp_i < \sum_{i=2}^rp_i ,
\end{equation}
which means that in every \emph{feasible} solution there is always
some $y_i<p_i$. This implies that in every \emph{optimal} solution,
the last constraint is tight, since otherwise increasing such a $y_i$
will decrease the value of the solution, without violating any of the
other constraints.

It is easy to see that this linear program is both feasible and
bounded, and therefore has an optimal solution that is basic (i.e., a
vertex of the polytope). The dimension of the linear program (i.e.,
the number of variables) is $2r-2$, and thus in a basic feasible
solution (at least) $2r-2$ of the $2r$ constrains must be
tight. Therefore there are at most $2$ untight constraints among the
$2r-2$ constraints $0\le y_i \le p_i$, meaning there are at most $2$
indices $i,j$ such that $p_i\neq 0$.  We proceed by analyzing the
possible cases:
\begin{itemize}
\item $0<y_i=p_i$ and $0<y_j=p_j$. This case is not possible, since
  that would have implied $\sum_{i=2}^r y_i = \sum_{i=2}^rp_i$,
  contradicting \eqref{eq:psmallerthany}.
\item $0=y_i<p_i$ and $0=y_j<p_j$. This case is also not possible
  since that would have implied $\sum_{i=2}^r y_i = 0$, contradicting
  the tightness of the last constraint in an optimal solution.
\item $0=y_i<p_i$ and $0<y_j=p_j$. Since all other $p_\ell=0$, the
  other LP constraints become
  \begin{gather*}
    p_i+p_j=1\\
    0+p_j=y_i+y_j=\tfrac \alpha t(ip_i+jp_j)
  \end{gather*}
  Solving the two equations we obtain:
  \begin{equation} \label{eq:case3} \textsc{LP}=\left(1-\tfrac
      {\alpha i}{t+\alpha i-\alpha j} \right)Q_{t-i+1} \ge
    \left(1-\tfrac {\alpha i}{t+\alpha i-\alpha r} \right)Q_{t-i+1} =
    \tfrac {t-\alpha r}{t+\alpha i -\alpha r}Q_{t-i+1}
  \end{equation}
  To use the induction hypothesis, we distinguish between two cases:
  \begin{enumerate}
  \item $t-i+1\ge \alpha r$, in which case it is thus sufficient to
    prove the following claim.
    \begin{claim}
      For every half-integer $\alpha\ge 1$ and integers $r \ge i \ge
      2$ and $t\ge \alpha r + i- 1$, it holds
      $\frac{Q_{t-i+1,r,\alpha}}{Q_{t,r,\alpha}} \ge \frac {t+\alpha i
        -\alpha r}{t-\alpha r}$.
    \end{claim}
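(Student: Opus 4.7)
The plan is to substitute the explicit formula $Q_{n,r,\alpha} = \frac{2\alpha+1}{r+1}\binom{n-\alpha(r-2)}{2\alpha}^{-1}$ to reduce the claim to an elementary inequality between ratios of binomial coefficients, and then finish with Bernoulli's inequality.

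Concretely, writing $m := t - \alpha(r-2)$, the prefactor $\frac{2\alpha+1}{r+1}$ cancels, and the ratio becomes
$$
\frac{Q_{t-i+1,r,\alpha}}{Q_{t,r,\alpha}} \;=\; \frac{\binom{m}{2\alpha}}{\binom{m-i+1}{2\alpha}} \;=\; \prod_{s=0}^{i-2} \frac{m-s}{m-2\alpha-s},
$$
a product of $i-1$ elementary factors. The hypothesis $t \ge \alpha r + i - 1$ translates to $m - 2\alpha - (i-2) \ge 1$, so every denominator is positive and the identity is well-defined. I would then observe that each factor $\frac{m-s}{m-2\alpha-s} = 1 + \frac{2\alpha}{m-2\alpha-s}$ is non-decreasing in $s$, so the whole product is bounded below by its value at $s=0$ raised to the power $i-1$, namely $\left(\frac{m}{m-2\alpha}\right)^{i-1}$.

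The rest is Bernoulli's inequality applied with $x = \frac{2\alpha}{m-2\alpha} \ge 0$: $(1+x)^{i-1} \ge 1 + (i-1)x$, yielding a lower bound of $1 + \frac{2\alpha(i-1)}{m-2\alpha}$. Since $i \ge 2$ implies $2(i-1) \ge i$, this is in turn at least $1 + \frac{\alpha i}{m-2\alpha} = \frac{m - 2\alpha + \alpha i}{m - 2\alpha}$. Translating back via $m - 2\alpha = t - \alpha r$ and $m - 2\alpha + \alpha i = t + \alpha i - \alpha r$ recovers exactly the target ratio $\frac{t + \alpha i - \alpha r}{t - \alpha r}$.

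No step here is a serious obstacle once the binomial coefficients are unwound; the whole statement is essentially Bernoulli's inequality in disguise. The only mildly delicate point is the accounting of constants: bounding each of the $i-1$ factors by its $s=0$ value already loses some slack, and Bernoulli loses a further factor of $2$ relative to the desired linear-in-$i$ growth. Both losses are jointly absorbed by the hypothesis $i \ge 2$, which is exactly the range in which the induction step of Lemma~\ref{lem:bound_qn} needs the estimate.
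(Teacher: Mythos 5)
Your proposal is correct and follows essentially the same route as the paper: substitute the formula for $Q$, cancel the $\frac{2\alpha+1}{r+1}$ prefactor, unwind the ratio of binomial coefficients into a product of $i-1$ factors of the form $1+\frac{2\alpha}{\cdot}$, bound each below by the $s=0$ factor, and finish with Bernoulli plus $2(i-1)\ge i$. The only differences are notational (your $m$ is the paper's $t'+2\alpha$).
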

    \begin{proof}
      Recall that  $Q_t= \frac {2\alpha+1}{(r+1)}{{t-\alpha(r-2)} \choose
        2\alpha}^{-1}$ and denote $t'=t-\alpha r$. Then
      \begin{align*}
        \textsc{LHS} &= \frac { {{t'+2\alpha} \choose 2\alpha} } {
          {{t'-i+2\alpha+1} \choose 2\alpha} } = \frac
        {(t'+2\alpha)\cdots(t'+1)}{(t'+2\alpha-i+1)\cdots(t'+1-i+1)} =
        \frac
        {(t'+2\alpha)\cdots(t'+2\alpha-i+2)}{t'\cdots(t'-i+2)} \\
        &= \left(1+\frac {2\alpha}{t'} \right)\cdots\left(1+\frac
          {2\alpha}{t'-i+2} \right) \ge \left(1+\frac {2\alpha}{t'}
        \right)^{i-1} \ge 1+\frac {2\alpha(i-1)}{t'} \ge 1+\frac{\alpha i
        }{t'} = \textsc{RHS}
      \end{align*}
    \end{proof}
  \item $t-i+1 < \alpha r$, in which case $Q_{t-i+1}=1$. Here we get
    \[
    \textsc{LP} \ge 1-\tfrac {\alpha i}{t-\alpha r+ \alpha i} \ge 1 - \tfrac
    {\alpha i}{\alpha i+1} = \tfrac 1{\alpha i+1} \ge \tfrac 1{\alpha
      r+1} \ge \tfrac {2\alpha +1}{(r+1){{t-\alpha(r-2)} \choose
        2\alpha }} = Q_t ,
    \]
    where the last inequality follows from the fact that $t-\alpha(r-2)\ge
    \alpha r +1 -\alpha(r -2) \ge 2\alpha + 1$.
  \end{enumerate}
\item $0<y_i<p_i$ and $0=y_j=p_j$. In this case $p_i=1$, $y_i=\frac
  {\alpha i}t$, and therefore
  \[
  \textsc{LP}=\left(1-\tfrac {\alpha i} t\right)Q_{t-i+1} \ge
  \left(1-\tfrac {\alpha i}{t-\alpha(r-i)}\right)Q_{t-i+1} ,
  \]
  which is exactly as in \eqref{eq:case3} in the previous case.
\end{itemize}
Having bounded the value of the linear program, this completes the
proof of Lemma \ref{lem:bound_qn}.

\subsubsection{Lower Bound}
\label{sec:CutsLB}

For completeness, we remark that at least for $\alpha > 1$, the
exponential dependence on $r$ in Theorem \ref{thm:numcuts} is
indeed necessary. Consider a ``sunflower'' hypergraph on $n=rm-m+1$
vertices that consists of $m$ hyperedges of size $r$, intersecting at
a single vertex, supplemented with $m$ two-uniform cliques of size $r$
each -- one for each of the hyperedges. Each of the $r$-hyperedges is
given weight $1$ and each of the two-edges is given weight $\frac
{\alpha-1}{2^r}$.  The minimum cut value in this graph is $1$, since
every cut contains at least one of the $r$-hyperedges. However, all
$\Omega(m\cdot 2^r)$ cuts given by the $2^r$ bipartitions of a single
$r$-hyperedge, are of weight at most $\alpha$.

\subsection{Proof Of Theorem \ref{thm:hypersparsifiers}}
\label{apx:fullkarger}
Since the proof of Theorem \ref{thm:hypersparsifiers} closely follows
the proof in the original setting of graphs (cf.~\cite{BK02}), we
refrain from repeating the full details. Instead, we choose to present
an outline of the proof, emphasizing the key reasons it translates to
the hypergraph setting, and handling the key differences, that require
a separate treatment.

The main tool used by Bencz\'ur and Karger is random sampling: each
edge $e$ is included in the sparsifier with probability $p_e$, and
given weight $w_e/p_e$ if included. It is thus immediate that every cut
in the sparsifier preserves its weight in expectation. The main task is
thus to carefully select the sampling probabilities $p_e$ in order
to both obtain the required number of edges in the sparsifier, and
guarantee the required concentration bounds.

As a rough sketch, to guarantee concentration, one needs to apply a
Chernoff bound to estimate the probability that a specific cut (which
is a sum of the independent samples of the edges it contains) deviates
from its expectation. Subsequently, a union bound over all cuts is
used to show the concentration of \emph{all} cuts. Yet a-priori it is
unclear whether the Chernoff bound is strong enough to handle the
exponentially many different cuts in the union bound. The remedy comes
in the form of the bound on the the number of cuts of each weight
given by Theorem \ref{thm:numcuts}. It is still unclear how should the
random sampling be tuned to handle both the small and large cuts
simultaneously. If we are to chose the sampling probability to be
small enough to handle the exponentially many large cuts, we run into
trouble of small cuts having large variance. On the other hand,
increasing the sampling probability imposes a risk of ending up with
too many edges in the sparsifier.

Following Bencz\'ur and Karger, we now show that when no edge carries a
large portion of the weight in any of the cuts, the cut-counting theorem is
sufficient to obtain concentration.

\begin{theorem} \label{thm:nosmallcuts} Let $H=(V,E,w)$ be a
  $r$-uniform hypergraph on $n$ vertices, let $\eps>0$ be an error
  parameter, and fix $d\ge1$. If $H'=(V,E',w')$ is a random
  subhypergraph of $H$ where the weights $w'$ are independent random
  variables distributed arbitrarily (and not necessarily identically)
  in the interval $[0,1]$, and the expected weight of every cut in
  $H'$ exceeds $\rho_\eps=\frac 3{\eps^2}\left(r+(d+2)\ln n\right)$,
  then with probability at least $1-n^{-d}$, every cut in $H'$ has
  weight within $(1\pm\eps)$ of its expectation.
\end{theorem}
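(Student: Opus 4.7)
The plan is to apply a standard multiplicative Chernoff bound to each cut individually and then take a union bound over all cuts, stratifying them by their expected weight so that the hypergraph cut-counting bound (Theorem~\ref{thm:numcuts}) can be invoked. This is the direct hypergraph analog of the Bencz\'ur--Karger argument, and is the main reason Theorem~\ref{thm:numcuts} was proved in its present form.

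First I would introduce the deterministic ``expectation hypergraph'' $H''=(V,E,\mu)$ with edge weights $\mu_e:=\E[w'_e]\in[0,1]$. The weight of a cut $(S,\bar S)$ in $H''$ equals $\mu(S,\bar S):=\E[w_{H'}(S,\bar S)]$, and by hypothesis the minimum cut $\hat\mu$ of $H''$ satisfies $\hat\mu\ge\rho_\eps=\tfrac{3}{\eps^2}(r+(d+2)\ln n)$. For a fixed cut $(S,\bar S)$, the random variable $w_{H'}(S,\bar S)$ is a sum of independent $[0,1]$-bounded variables with mean $\mu(S,\bar S)$, so the multiplicative Chernoff bound gives
\[
  \Pr\!\left[\,\lvert w_{H'}(S,\bar S)-\mu(S,\bar S)\rvert>\eps\,\mu(S,\bar S)\,\right]
  \;\le\; 2\exp\!\left(-\tfrac{\eps^2\mu(S,\bar S)}{3}\right).
\]

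Next I would stratify the cuts of $H''$ by $\mu$-value. Applying Theorem~\ref{thm:numcuts} to $H''$, for every half-integer $\alpha\ge1$ the number of cuts with $\mu(S,\bar S)\le\alpha\hat\mu$ is at most $O(2^{\alpha r}n^{2\alpha})$. For a cut in the shell $\alpha\hat\mu\le\mu(S,\bar S)\le(\alpha+\tfrac12)\hat\mu$, the displayed per-cut failure probability is at most
\[
  2\exp\!\bigl(-\alpha(r+(d+2)\ln n)\bigr)
  \;=\;2\,e^{-\alpha r}\cdot n^{-\alpha(d+2)}.
\]
Multiplying the per-shell cut-count by this per-cut failure probability yields a contribution of order $(2/e)^{\alpha r}\,n^{-\alpha d}$ (up to constants from the half-shell counting), which decays geometrically in $\alpha$; summing over $\alpha=1,\tfrac32,2,\ldots$ keeps the total union bound at $O(n^{-d})$, as required.

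The delicate step, which I expect to be the main obstacle, is the balancing encoded in the definition of $\rho_\eps$: one must choose the threshold so that the $2^{\alpha r}$ blowup from the cut-counting theorem is exactly absorbed by the Chernoff factor $e^{-\alpha r}$, and simultaneously so that the $n^{2\alpha}$ blowup is dominated by $n^{-\alpha(d+2)}$. This dictates an \emph{additive} $r$ term (rather than a multiplicative one) in $\rho_\eps$, and is precisely the place where the improved hypergraph cut-counting bound matters -- a weaker bound of the form $n^{r\alpha}$ would force $\rho_\eps$ to scale multiplicatively with $r$, which would, in turn, ruin the $\tilde O(rn/\eps^2)$ sparsifier size claimed in Theorem~\ref{thm:hypersparsifiers}. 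Once $\rho_\eps$ is calibrated as above, the remainder is a routine Chernoff-plus-union-bound calculation following the Bencz\'ur--Karger template.
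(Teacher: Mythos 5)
Your proposal follows the same route as the paper: bound the per-cut deviation by a multiplicative Chernoff bound for sums of independent $[0,1]$-bounded variables, invoke the hypergraph cut-counting bound (Theorem~\ref{thm:numcuts}) on the ``expectation hypergraph'' whose minimum cut is $\hat\mu\ge\rho_\eps$, and take a union bound after calibrating $\rho_\eps$ so that the $e^{-\alpha r}$ Chernoff factor kills the $2^{\alpha r}$ cut-count blowup and $n^{-\alpha(d+2)}$ kills $n^{2\alpha}$; this is exactly what the paper intends when it says the argument of~\cite{karger1999random} ``translates to the hypergraph setting.'' Your observation that an \emph{additive} $r$ term in $\rho_\eps$ suffices, and that this is the point of having $2^{\alpha r}n^{2\alpha}$ rather than $n^{r\alpha}$ in Theorem~\ref{thm:numcuts}, is precisely the paper's motivation.

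One small technical point is worth noting: the half-shell stratification you propose (shell $\alpha$: cuts with $\alpha\hat\mu\le\mu<(\alpha+\tfrac12)\hat\mu$, counted via the bound at $\alpha+\tfrac12$) loses a factor of roughly $2^{r/2}n$ relative to the clean per-shell product $(2/e)^{\alpha r}n^{-\alpha d}$ that you state; that stray factor is not ``constants,'' and with the constants literally as in $\rho_\eps=\tfrac3{\eps^2}(r+(d+2)\ln n)$ the shell union bound gives only $O((2\sqrt2/e)^r\,n^{1-d})$. Karger's original graph proof avoids this loss by an Abel-summation / integration-by-parts argument over cut weights rather than discrete shells, and that is the version the paper's reference implicitly relies on. Alternatively, your shell argument goes through verbatim if the constants in $\rho_\eps$ are nudged up (e.g.\ to $c\cdot r$ with $c>\tfrac{3\ln2}{2}$ and $(d+3)\ln n$; indeed the paper itself later writes $p_e=\tfrac{3(r+(d+3)\ln n)}{k_e\eps^2}$, inconsistently with the $(d+2)$ in this theorem). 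This is a constant-level issue, not a conceptual one; your plan is the right one.
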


One can verify that the proof of a similar theorem for the case of
graphs, as appears in \cite{karger1999random}, translates to the
hypergraph setting. For the sake of the proof, a cut is merely a sum
of independently sampled edges/hyperedges. The lower bound on the
weight of the minimum expected cut $\hat w$ allows one to show that
probability of a cut of weight $\alpha\hat w$ to deviate from its
expectation is at most $n^{-\alpha(d+2)}\cdot e^{-\alpha r}$ which trades-off nicely with
the bound on the number of cuts given by Theorem \ref{thm:numcuts}.

Informally, the latter theorem implies that in order to obtain the
desired concentration bound in the general case, the sampling
probability of an edge must be inversely proportional to the size of
the largest cut that contains that edge. This motivates the following
definitions, and the theorem that follows them.

\begin{definition}
  A hypergraph $H$ is $k$-connected if the weight of each cut in $H$
  is at least $k$.
\end{definition}

\begin{definition}
  A $k$-strong component of $H$ is a maximal $k$-connected
  vertex-induced subhypergraph of $H$.
\end{definition}

\begin{definition}
  The strong connectivity of hyperedge $e$, denoted $k_e$, is the
  maximum value of $k$ such that a $k$-strong component contains (all
  endpoints of) $e$.
\end{definition}

Note that one can compute the strong connectivities of all hyperedges
in a hypergraph in polynomial time as follows. Compute the global
minimum cut, and then proceed recursively into each of the two
subhypergraphs induced by the minimum cut. The strong connectivity of an
edge would then be the maximum among the minimum cuts of all the
subhypergraphs it has been a part of throughout the recursion. The minimum
cut in a hypergraph was shown to be computable in $O(n^2\log n+mn)$
time by \cite{klimmekwagner}. Note that since the total number of
subhypergraphs considered throughout the recursion is at most $n$, there
are at most $n$ different strong-connectivity values in any
hypergraph.

\begin{theorem}
  Let $H$ be an $r$-uniform hypergraph, and let $\eps>0$ be an error
  parameter. Consider the hypergraph $H_\eps$ obtained by sampling
  each hyperedge $e$ in $H$ independently with probability $p_e =
  \frac {3((d+2)\ln n + r)}{k_e\eps^2}$, giving it weight $1/p_e$ if
  included. Then with probability at least $1-O(n^{-d})$
\begin{enumerate}\compactify
\item The hypergraph $H_\eps$ has $O\left(\frac n{\eps^2}(r+\log
    n)\right)$ edges.
\item Every cut in $H_\eps$ has weight between $(1-\eps)$ and
  $(1+\eps)$  times its weight in $H$.
\end{enumerate}
\end{theorem}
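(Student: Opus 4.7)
The plan is to mirror Benczúr and Karger's sparsification argument~\cite{BK02,karger1999random}, using Theorem~\ref{thm:nosmallcuts} (and behind it Theorem~\ref{thm:numcuts}) in place of its graph analogue. The two conclusions are handled by essentially independent arguments.

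For part (1), I first establish the hypergraph analogue of the identity $\sum_{e \in E} w_e / k_e \le n-1$. This is proved by a recursive decomposition of $H$: in any strong component of connectivity exactly $k$, the edges of a minimum cut all have $k_e = k$, so their contribution to the sum is (weight of cut)$/k = 1$; removing this cut splits the component, and over the whole recursion there are at most $n-1$ splits. Combined with this bound,
\[
\E\!\left[|E(H_\eps)|\right] \;=\; \sum_e p_e \;\le\; \frac{3((d+2)\ln n + r)}{\eps^2}(n-1) \;=\; O\!\left(\tfrac{n(r+\log n)}{\eps^2}\right),
\]
and a Chernoff bound on the sum of independent $\{0,1\}$ sampling indicators yields concentration with error probability at most $n^{-d}$.

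For part (2), I partition the hyperedges according to the hierarchy of strong components, which forms a laminar family on $V$. Within each strong component $S$ of strength $k$, consider the sub-hypergraph obtained by restricting to the hyperedges contained in $S$; since $p_e = \Theta((r+\log n)/(k_e\eps^2)) \ge \Theta((r+\log n)/(k\eps^2))$ for every such $e$, the (rescaled) weight $w_e/p_e$ of a sampled edge can be normalized to lie in $[0,1]$, and the expected weight of every cut inside $S$ is at least $\rho_\eps = \frac{3}{\eps^2}(r+(d+2)\ln n)$. Theorem~\ref{thm:nosmallcuts} then guarantees that every cut of the subhypergraph on $S$ is preserved within $(1\pm\eps)$ of its expectation with probability at least $1 - n^{-d}$; a union bound over the at most $n$ strong components preserves all such ``restricted'' cuts simultaneously with probability $1 - O(n^{-d+1})$, which we absorb into the desired $1 - O(n^{-d})$ error bound by adjusting $d$.

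The main obstacle, and the step that needs the most care, is to conclude from preservation of every restricted cut that every cut of $H$ itself is preserved. One decomposes an arbitrary cut $C$ using the laminar hierarchy of strong components and writes $w(C)$ as a telescoping sum of contributions coming from the cut induced inside each strong component; since each of these contributions is preserved within a $(1\pm\eps)$ multiplicative factor and the same factor distributes over the sum, $w(C)$ in $H_\eps$ is within $(1\pm\eps)$ of $w(C)$ in $H$. This combinatorial step is precisely the one carried out in~\cite{BK02} for graphs and transfers without modification to the hypergraph setting, since strong connectivity, strong components, and the associated decomposition are defined purely in terms of cuts and vertex-induced subhypergraphs.
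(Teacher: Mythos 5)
Your part (1) is fine, and your overall skeleton (apply Theorem~\ref{thm:nosmallcuts} inside the laminar hierarchy of strong components, union bound, aggregate) matches the Bencz\'ur--Karger approach the paper defers to. But the execution of part (2) has a genuine gap.

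The central step --- applying Theorem~\ref{thm:nosmallcuts} to the restriction of $H$ to \emph{all} hyperedges contained in a $k$-strong component $S$ --- does not go through. Your inequality $p_e = \Theta\bigl((r+\log n)/(k_e\eps^2)\bigr) \ge \Theta\bigl((r+\log n)/(k\eps^2)\bigr)$ is reversed: for $e\subseteq S$ one has $k_e\ge k$, hence $p_e \le \rho_\eps/k$, and for edges deep inside nested (stronger) sub-components $p_e$ can be arbitrarily small. This is not just a sign typo; it breaks the normalization. Theorem~\ref{thm:nosmallcuts} needs two things simultaneously: the random weights $w'_e$ must lie in $[0,1]$, and the expected weight of every cut must exceed $\rho_\eps$. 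A sampled edge carries weight $w_e/p_e = w_e k_e/\rho_\eps$, so to push all sampled weights into $[0,1]$ you must rescale by at most $\min_{e\subseteq S} p_e/w_e$; but then the rescaled expected min cut of $H[S]$ drops to $k\cdot\min_e p_e/w_e \le k\rho_\eps/\max_{e\subseteq S}(w_ek_e)$, which is \emph{below} $\rho_\eps$ as soon as $S$ has a nested strong component of higher strength. So the hypotheses of Theorem~\ref{thm:nosmallcuts} are simply not met for $H[S]$ with all its edges.

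The standard Bencz\'ur--Karger remedy --- and what the paper means by ``decompose the hyperedges to layers based on their strong-connectivity'' --- is to split each edge weight $w_e$ across the at most $n$ distinct strength thresholds $c \le k_e$, with per-layer weight proportional to $w_e/k_e$, so that within each layer-$c$ hypergraph all edges share the same effective sampling rate $\rho_\eps/c$ and the min cut is $\ge c$; Theorem~\ref{thm:nosmallcuts} is then applied once per layer and the contributions are summed. Your ``restrict to $H[S]$'' approach is not a substitute for this layering. Relatedly, the ``telescoping sum'' step needs care: if you mean $w(C) = \sum_S\bigl[w(C\cap E(H[S])) - \sum_{S'\text{ child of }S}w(C\cap E(H[S']))\bigr]$, the minus signs prevent $(1\pm\eps)$-preservation of each bracketed term from distributing to the total; and if you instead mean the disjoint decomposition into ``edges whose smallest containing strong component is $S$,'' those edge sets do not form cuts of any subhypergraph with min cut $\ge k(S)$ (a cut $C$ may split $S$ entirely through a nested child, contributing almost nothing to $S$'s primary edges), so again Theorem~\ref{thm:nosmallcuts} does not apply to them directly. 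The layering trick is precisely what fixes this, and it is missing from your argument.
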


The proof of the theorem for the hypergraph setting is again
completely identical to the proof in \cite{BK02} (c.f., Theorem
2.6). The only thing that needs verifying is that strong-connectivity
induces a recursive partitioning of the vertices of the hypergraph,
just as it does when dealing with graphs. This is in fact the case,
mainly because the components considered in the definitions are
vertex-induced, and therefore the cardinality of the hyperedges plays
no part. One can then decompose the hyperedges of the hypergraph to
``layers'', based on their strong-connectivity, and apply Theorem
\ref{thm:nosmallcuts} to each layer separately.

To complete our discussion we bring the reader's attention to a couple
of places where the cardinality of the hyperedges has played part:
\begin{itemize}\compactify
\item The modified parameter $p_e= \frac {3(r+(d+3)\ln n)}
  {k_e\eps^2}$ counters the number of cuts from Theorem
  \ref{thm:numcuts} (at most $O(2^{\alpha r}n^{2\alpha})$ cuts of
  weight $\alpha \hat w$) and the number of distinct edge-connectivity
  values, which is at most $n$.\footnote{In their analysis \cite{BK02}
    take a union bound over $n^2$ distinct edge-connectivity
    values. For hypergraphs using the stronger linear bound (instead of the
    trivial $n^r$) is crucial.}
\item The number of edges in the sparsifier is (with high probability)
  $O\left(\frac{n}{\eps^2}(r + \log n)\right)$ since the sampling
  probability is also linear in $r$.
\end{itemize}

\subsection{SAT Sparsification}
\label{sec:SAT}

\begin{lemma} \label{lem:sat2hg} Given an $r$-CNF formula $\Phi$ with
  $n$ variables and $m$ clauses, there exists an {$(r+1)$-uniform}
  hypergraph $H$ with $2n+1$ vertices, and a mapping
  $\Pi:\{0,1\}^n\rightarrow \{0,1\}^{2n+1}$ from the set of all
  assignments to $\Phi$ to the set of all cuts in $H$, such that for
  every assignment $\phi$, it holds that
  $val_\Phi(\phi)=val_H(\Pi(\Phi))$.
\end{lemma}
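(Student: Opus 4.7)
The plan is to construct $H$ by introducing, in addition to a single ``anchor'' vertex $v_0$, two vertices $v_i$ and $\bar v_i$ per Boolean variable $x_i$, so that $|V(H)| = 2n+1$. For each clause $C = \ell_1 \vee \dots \vee \ell_r$ of $\Phi$, I will add a single hyperedge of cardinality $r+1$ containing $v_0$ together with, for every literal $\ell_k$, the vertex representing its \emph{negation}: specifically $v_i$ if $\ell_k = \neg x_i$ and $\bar v_i$ if $\ell_k = x_i$. All hyperedges receive unit weight. (If the original clause has fewer than $r$ literals, or repeats a variable, the resulting hyperedge just has duplicated vertices, which is permitted by our convention on $r$-uniform hypergraphs.)

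The mapping $\Pi$ sends an assignment $\phi\in\{0,1\}^n$ to the bipartition $(S_\phi,\bar S_\phi)$ defined by: $v_0\in S_\phi$; $v_i\in S_\phi$ iff $\phi(x_i)=1$; and $\bar v_i\in S_\phi$ iff $\phi(x_i)=0$. Thus $v_0$ plays the role of the ``true'' side, and for every variable exactly one of $v_i,\bar v_i$ lies on the same side as $v_0$, namely the one representing the true literal under $\phi$.

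The core computation is then a clause-by-clause check. Consider a clause $C$ with corresponding hyperedge $e = \{v_0\}\cup\{u_1,\dots,u_r\}$, where $u_k$ is the vertex representing $\neg\ell_k$. By definition of $\Pi$, $u_k\in S_\phi$ iff $\neg\ell_k$ is true under $\phi$ iff $\ell_k$ is false under $\phi$. Since $v_0\in S_\phi$ always, the hyperedge $e$ is cut by $(S_\phi,\bar S_\phi)$ iff some $u_k$ lies in $\bar S_\phi$, equivalently iff some literal $\ell_k$ is true under $\phi$, equivalently iff $C$ is satisfied by $\phi$. Summing over all clauses, the number of hyperedges of $H$ cut by $\Pi(\phi)$ equals the number of clauses of $\Phi$ satisfied by $\phi$, yielding $\mathrm{val}_\Phi(\phi)=\mathrm{val}_H(\Pi(\phi))$.

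There is no real obstacle here; the only design choice that requires care is the use of \emph{negated} literals in the hyperedge together with the anchor vertex $v_0$. The alternative of placing the literals themselves would force the hyperedge to be monochromatic precisely when \emph{all} literals are true, corresponding to full satisfaction rather than ordinary satisfaction. Note that $\Pi$ need not be surjective (most cuts of $H$ do not correspond to assignments, since nothing forces $v_i$ and $\bar v_i$ to lie on opposite sides), but the lemma only asserts equality of values on the image of $\Pi$, so no extra gadgets are needed.
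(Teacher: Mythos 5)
Your construction is isomorphic to the paper's: your anchor $v_0$ is their special vertex $F$, and your choice to place the negated literals in the hyperedge while putting true literals on $v_0$'s side is exactly the complement of their choice to place the literals themselves in the hyperedge while putting false literals on $F$'s side — relabeling each literal-vertex by its negation turns one into the other. The proof is correct and follows essentially the same approach as the paper.
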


\begin{proof}
  Consider an $r$-CNF formula $\Phi$ with variables $\{x_i\}_{i\in
    [n]}$. We construct the weighted hypergraph $H$ whose vertices are
  $\{x_i,\neg x_i\}_{i\in [n]}$ and a special vertex $F$. For each
  clause $\ell_{i_1}\vee\ell_{i_2}\vee\cdots\vee\ell_{i_r}$, we add a
  hyperedge $\{\ell_{i_1},\ell_{i_2},\dots,\ell_{i_r},F\}$. Moreover,
  let $\Pi$ be the mapping that maps an assignment to $\Phi$ to the
  cut in $H$ obtained by placing all vertices corresponding to true
  literals on one side, and the $F$ vertex together with all vertices
  corresponding to false literals on the other side.

  For an assignment $\phi$ to $\Phi$, it is clear that a hyperedge is
  contained in the cut $\Pi(\phi)$ if and only if at least one of the
  vertices it contains is on the opposite side of $F$. Therefore the
  weight of $\Phi(\phi)$ is exactly the value of $\phi$.
\end{proof}

Theorem \ref{thm:sat} follows from Lemma \ref{lem:sat2hg} and
Theorem \ref{thm:hypersparsifiers}.

\section{Future Directions}
\label{sec:future}

Our results raise several questions that deserve further work.

\begin{description}

\item[Sketching \MC.] Our results and the results of \cite{KKS14} make
  progress on the \emph{streaming} complexity of approximating \MC,
  showing polynomial space lower bounds. To fully resolve this
  problem, one still needs to determine whether $\Omega(n)$ space is
  necessary for any non-trivial approximation (i.e., strictly better
  than $1/2$), or whether there is a sublinear-space streaming
  algorithm that beats the $1/2$-approximation barrier.

  Also of interest is the \emph{communication complexity} of
  approximating $\MC$ in the \emph{multi-round two-party} model, and
  even a multi-round analogue of \textsc{Boolean Hidden
    Hypermatching}.

\item[Sketching Cuts in Hypergraphs.]
  Can one improve on the linear dependence on $r$ in
  hypergraph sparsification (Theorem~\ref{thm:hypersparsifiers})?
  Or prove a matching lower bound?
  Such a refinement could be especially significant
  when the hyperedge cardinality is unbounded.

\item[General CSPs.] Do all CSPs admit sketches of bit-size $o(n^r)$,
  or even $\tilde O(n)$, that preserve the values of all
  assignments?
  From the other direction (lower bounds), we may even restrict ourselves
  to sketches that are \emph{sub-instances}, and ask whether there
  are CSPs that require size $\Omega(nr)$ or even $n^{\Omega(r)}$?
\end{description}

\subsection*{Acknowledgements}

We thank Alexandr Andoni and David Woodruff for useful discussions
at early stages of this work.

{\small
\bibliographystyle{alphaurlinit}
\bibliography{robi,Paper}
}

\end{document}

